\theoremstyle{plain}
\newtheorem{theorem}{Theorem}
\newtheorem{lemma}{Lemma}
\newtheorem{corr}{Corollary}
\newtheorem{open}{Open Problem}
\theoremstyle{definition}  
\newcommand{\df}[1]{{\it #1}}
\newcommand{\tr}[1]{\ensuremath{track(#1)}}
\newcommand{\Oh}{{\ensuremath{\mathcal{O}}}}
\newcommand{\Th}{{\ensuremath{\mathcal{T}}}}
\DeclareMathOperator{\tn}{tn}
\DeclareMathOperator{\dist}{dist}
\newenvironment{proofof}[1]{\medskip\noindent\textbf{Proof of #1:}}{\mbox{}\hfill\qed\par\medskip}
\begin{document}

\title{Improved Bounds for Track Numbers of Planar~Graphs}

\author{
    Sergey Pupyrev \\
    spupyrev@gmail.com
}

\date{}
\maketitle

\begin{abstract}
A \emph{track layout} of a graph consists of a vertex coloring and a total order of each color class, such that 
no two edges cross between any two color classes. The \emph{track number} of a graph is the minimum number 
of colors required by a track layout of the graph.

This paper improves lower and upper bounds on the track number of several families of planar graphs. We prove that every planar graph
has track number at most $225$ and every planar $3$-tree has track number at most $25$. Then we show that there exist outerplanar graphs
whose track number is $5$, which leads to the best known lower bound of $8$ for planar graphs.
Finally, we investigate leveled planar graphs and tighten bounds on the track number of weakly leveled graphs, Halin graphs, and X-trees.
\end{abstract}

\section{Introduction}

A \df{track layout} of a graph is a partition of its vertices into sequences, called \df{tracks}, such that the vertices in each sequence form an independent set and no two edges between a pair of tracks cross each other.
The \df{track number} of a graph is the minimum number of tracks in a track layout.
Track layouts are formally introduced by Dujmovi{\'c}, Morin, and Wood~\cite{DMW05}, although similar concepts are implicitly studied in several earlier works~\cite{HR92,GLM05,FLW06}. An original motivation
for studying track layouts is their connection with the existence of low-volume three-dimensional graph drawings: 
A graph with $n$ vertices has a three-dimensional straight-line drawing in a grid of size $\Oh(1) \times \Oh(1) \times \Oh(n)$ if and only if it has track number $\Oh(1)$~\cite{DPW04,DMW05}.

Track layouts are closely related to other models of linear graph layouts, specifically, stack and queue layouts. A \df{stack}
layout of a graph consists of a linear order on the vertices and a partition of its edges so that no edges in a single part cross;
that is, there are no edges $(u, v)$ and $(x, y)$ in the same part with $u < x < v < y$. A \df{queue} layout is defined
similarly except no two edges in the same part \df{nest}; that is, 
there are no edges $(u, v)$ and $(x, y)$ in a part with $u < x < y < v$.
The minimum number of parts needed in a stack (queue) layout of a graph
is called its \df{stack number} (\df{queue number})~\cite{Oll73,HR92}.
A major result in the field is that track number is tied to queue number in a sense that one is bounded by a function of the other~\cite{DPW04}.
In particular, every $t$-track graph has a $(t-1)$-queue layout, and every $q$-queue graph has track number at most $4q \cdot 4q^{(2q-1)(4q-1)}$. The relationship between stack and track layouts is not that prominent but it is known that stack number is bounded by track number for bipartite graphs~\cite{DPW04}; whether the reverse is true is an open question.
Therefore, a study on track layouts may shed light on the relationship between the linear graph layouts.

In this paper we investigate lower and upper bounds on the track number of various families of planar graphs. Table~\ref{table:tracks} summarizes new and existing bounds described in the literature.
A recent breakthrough result by Dujmovi{\'c}, Joret, Micek, Morin, Ueckerdt, and Wood~\cite{DJMMUW19} implies that all planar graphs have bounded track number (in fact, the result extends to every proper minor-closed class of graphs). Their analysis leads to a very large constant, as it is based on a fairly generic technique. We provide an alternative proof resulting in the upper bound of $225$ for the track number of planar graphs (Section~\ref{sect:general_upper}). 
An important ingredient of our construction is an improved upper bound of $25$ for the track number of planar $3$-trees (Section~\ref{sect:bounded_treewidth}). 
For the lower bounds, we find an outerplanar graph that requires $5$ tracks, which is worst-case optimal. This resolves an open question posed in~\cite{DPW04} and provides the best lower bound of $8$ on the track number of general planar graphs (Section~\ref{sect:general_lower}).

Finally in Section~\ref{sect:other}, we study track layouts of (weakly) leveled planar graphs, which are the graphs with planar leveled drawings having no dummy vertices. This is a well studied family of planar graphs, as it is related to 
layered graph drawing and $1$-queue layouts~\cite{HR92}; refer to Section~\ref{sect:other} for a definition.
We prove that the existing upper bound of $6$ is worst-case optimal for the class of graphs, while certain subfamilies (for example, X-trees) admit a layout on $5$ tracks. The results close the gaps between upper and lower bounds on the track numbers for the subclasses of planar graphs. Our lower bounds in the section rely on computational experiments using
a SAT formulation of the track layout problem (Section~\ref{sect:sat}). We conclude the paper in Section~\ref{sect:open} with possible future research directions and open problems.

\begin{table}[!t]
	\centering
	\caption{Track numbers of various families of planar graphs}
	\label{table:tracks}
	\medskip
	\resizebox{\columnwidth}{!}{%
		\begin{tabular}{p{3.0cm}|c@{\hspace{.9em}}c@{\hspace{.9em}}c@{\hspace{.9em}}c@{\hspace{.9em}}|c@{\hspace{.9em}}c@{\hspace{.9em}}c@{\hspace{.9em}}c@{\hspace{.9em}}}
			\toprule
			& \multicolumn{4}{c|}{Upper bound} & \multicolumn{4}{c}{Lower bound} \\
			\cmidrule(l{1ex}r{1ex}){2-5}
			\cmidrule(l{1ex}r{1ex}){6-9}
			Graph class & 
			\multicolumn{1}{c}{Old} & \multicolumn{1}{c}{Ref.} & \multicolumn{1}{c}{New} & \multicolumn{1}{c|}{Ref.} & 
			\multicolumn{1}{c}{Old} & \multicolumn{1}{c}{Ref.} & \multicolumn{1}{c}{New} & \multicolumn{1}{c}{Ref.}\\
			\midrule
			tree  
			& $3$ & \cite{FLW06} & & & $3$ & \cite{FLW06} & & \\
			
			outerplanar
			& $5$ & \cite{DPW04} & & & $4$ & \cite{DPW04} & $\mathbf{5}$ & [Thm.~\ref{thm:outer-track-5}] \\
			
			series-parallel
			& $15$ & \cite{GLM05} & & & $6$ & \cite{DMW05} & & \\
			
			planar 3-tree
			& $4,\!000$ & \cite{ABGKP18} & $\mathbf{25}$ & [Thm.~\ref{thm:p3t-25}] & $6$ & \cite{DMW05} & $\mathbf{8}$ & [Cor.~\ref{cor:planar-track-8}] \\
			
			planar
			& $461,\!184,\!080$ & \cite{DJMMUW19} & $\mathbf{225}$ & [Thm.~\ref{thm:planar-track}] & $7$ & \cite{DPW04} & $\mathbf{8}$ & [Cor.~\ref{cor:planar-track-8}] \\
			
			X-tree
			& $6$ & \cite{BDDEW15} & $\mathbf{5}$ & [Thm.~\ref{thm:xtree-upper}] & $3$ & \cite{GLM05} & $\mathbf{5}$ & [Thm.~\ref{thm:other}] \\
			
			Halin
			& $6$ & \cite{BDDEW15} & & & $3$ & \cite{DPW04} & $\mathbf{5}$ & [Thm.~\ref{thm:other}] \\
			
			weakly leveled
			& $6$ & \cite{BDDEW15} & & & $3$ & \cite{BDDEW15} & $\mathbf{6}$ & [Thm.~\ref{thm:other}] \\
			
			\bottomrule
		\end{tabular}
	}%
\end{table}

\section{Preliminaries}
\label{sect:prel}

In this section we introduce necessary definitions and recall some known results about track layouts.
Throughout the paper, $G = \big(V(G), E(G)\big)$ is a simple undirected graph with $n = |V(G)|$ vertices and $m = |E(G)|$ edges.

\paragraph{Track Layouts}

Let $\{V_i~:~1\le i \le t\}$ be a partition of $V$
such that for every edge $(u, v) \in E$, if $u \in V_i$ and $v \in V_j$, then $i \neq j$.
Suppose that $<_i$ is a total order of the vertices in $V_i$. Then the ordered set $(V_i, <_i)$ is
called a \df{track} and the partition is called a \df{t-track assignment} of $G$.
An \df{X-crossing} in a track assignment consists of two edges, $(u, v)$ and $(x, y)$, such that
$u$ and $x$ are on the same track $V_i$ with $u <_i x$, and $v$ and $y$ are on a different track $V_j$ with $y <_j v$. 
A \df{$t$-track layout} of graph $G$, denoted $\Th(G)$, is a $t$-track assignment with no X-crossings, and
\df{track number}, $\tn(G)$, is the minimum $t$ such that $G$ has a $t$-track layout. In particular, 
$1$-track graphs have no edges, and $2$-track graphs are the forests of caterpillars.

Some authors consider a relaxed definition of the concept, called \df{improper track layouts}, in which edges between consecutive vertices in a track are 
allowed~\cite{DiM03,FLW06}. It can be easily seen that the tracks of such a layout can be doubled to obtain a proper track layout~\cite{DMW05}. Thus, every graph with improper track number $t$ has (proper) track number at most $2t$. In Section~\ref{sect:other}, we show that the upper bound can be smaller than $2t$ for some subclasses of graphs. In this paper we study only proper track layouts.

A basic result on track layouts is a ``wrapping'' lemma, which is due to Dujmovi{\'c}~et~al.~\cite{DPW04} (which in turn is based on the ideas of Felsner et al.~\cite{FLW06}).
Consider a track assignment whose index set is $D$-dimensional. That is, 
let $\{V_{i_1, \dots, i_D} : 1 \le i_1 \text{ and } 1 \le i_d \le b_d \text{ for } 2 \le d \le D\}$ be a track layout of a graph $G$. 
Define the \df{partial span} of an edge $(u, v) \in E$ with 
$u \in V_{i_1, \dots, i_D}$ and $v \in V_{j_1, \dots, j_D}$ to
be $|i_1 - j_1|$. The following lemma describes how to modify the track layout of $G$ with possibly many tracks into a layout whose track number is bounded by a function of $b_2, \dots, b_D$ and the maximum partial span.

\begin{lemma}[Dujmovi{\'c} et al.~\cite{DPW04}]
	\label{lm:wrap}	
	Let $\{V_{i_1, \dots, i_D} : 1 \le i_1 \text{ and } 1 \le i_d \le b_d \text{ for }\\ 2 \le d \le D\}$ 
	be a track layout of a graph $G$ with maximum
	partial span $s \ge 1$. Then $$\tn(G) \le (2s+1) \cdot \prod_{2 \le d \le D} b_d$$
\end{lemma}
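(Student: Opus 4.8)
The plan is to obtain the bound by a single ``wrapping'' operation applied to the unbounded first coordinate. For each residue $r\in\{0,1,\dots,2s\}$ and each tuple $(i_2,\dots,i_D)$ with $1\le i_d\le b_d$, I would merge all original tracks $V_{i_1,i_2,\dots,i_D}$ with $i_1\equiv r\pmod{2s+1}$ into one new track $W_{r,i_2,\dots,i_D}$, ordered lexicographically: first by increasing $i_1$, and within a fixed value of $i_1$ by the original order $<_{i_1,i_2,\dots,i_D}$. There are at most $(2s+1)\prod_{2\le d\le D}b_d$ such tracks, so it suffices to show that this assignment is a valid track layout.

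First I would check that every $W_{r,i_2,\dots,i_D}$ is an independent set. If $u\in V_{i_1,i_2,\dots,i_D}$ and $v\in V_{i_1',i_2,\dots,i_D}$ both lie on this new track and $(u,v)\in E$, then the partial span bound gives $|i_1-i_1'|\le s<2s+1$, while $i_1\equiv i_1'\pmod{2s+1}$; hence $i_1=i_1'$, so $u$ and $v$ already share the original track $V_{i_1,i_2,\dots,i_D}$, contradicting that the original assignment has no edge inside a track.

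The main work is ruling out X-crossings, and here the case analysis is the part I expect to require the most care. Suppose edges $(u,v)$ and $(x,y)$ form an X-crossing in the new layout: $u,x$ lie on a new track $A$ with $u<_Ax$, and $v,y$ lie on a different new track $B$ with $y<_Bv$. Write $a_1,c_1$ for the original first coordinates of $u,x$ and $b_1,d_1$ for those of $v,y$. If $a_1\ne c_1$, then $u<_Ax$ forces $a_1<c_1$, so $c_1\ge a_1+2s+1$; combining this with $b_1\le a_1+s$ and $d_1\ge c_1-s$ gives $d_1\ge a_1+s+1>b_1$, hence $v<_By$, contradicting $y<_Bv$. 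So $a_1=c_1$, meaning $u$ and $x$ lie on a common original track. If in addition $b_1=d_1$, then $v,y$ lie on a common original track as well; that original track must differ from the one containing $u,x$ (otherwise $A=B$), and then $(u,v),(x,y)$ is already an X-crossing in the original layout, a contradiction. In the remaining case $b_1\ne d_1$, the relation $y<_Bv$ forces $d_1<b_1$ and hence $b_1\ge d_1+2s+1$, but $|b_1-d_1|\le|b_1-a_1|+|a_1-d_1|\le 2s$ by the span bound (using $c_1=a_1$), again a contradiction. Thus the new assignment has no X-crossing, which proves the lemma.

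The reason the construction works in each case is that the modulus $2s+1$ is the smallest value that separates any two distinct original first coordinates appearing on one new track by more than $2s$, while any edge pair forcing an X-crossing confines its four endpoints to first coordinates within a window of width $2s$; the strict inequality $2s<2s+1$ is precisely the slack the argument exploits.
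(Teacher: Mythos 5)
Your proof is correct: the modular wrapping of the first coordinate with modulus $2s+1$, lexicographic ordering by $(i_1,\text{original order})$, and the three-way case analysis on the first coordinates is exactly the standard argument for this lemma, which the paper itself states without proof, citing Dujmovi{\'c} et al.~\cite{DPW04}. Your independence check and X-crossing analysis (including the subcase where $a_1=c_1$ and $b_1=d_1$ reducing to an X-crossing in the original layout) are complete and match the original wrapping construction.
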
	

We stress that Lemma~\ref{lm:wrap} can be applied for track layouts whose index set is one-dimensional.
In that case, $\tn(G) \le 2s+1$, where $s$ is the maximum span of an edge for the single dimension.

\paragraph{Treewidth and Tree-Partitions}

A \df{tree-decomposition} of a graph $G$ represents the vertices of $G$ as subtrees of a tree, in such a way that the vertices
are adjacent if and only if the corresponding subtrees intersect. The \df{width} of a tree-decomposition
is one less than the maximum size of a set of mutually intersecting subtrees, and the \df{treewidth}
of $G$ is the minimum width of a tree-decomposition of $G$.
For a fixed integer $k \ge 1$, a \df{k-tree} is a maximal graph of treewidth $k$, such that no more edges can be added 
without increasing its treewidth.
Alternatively, a $k$-tree is defined recursively as follows:
A $k$-clique is a $k$-tree, and the graph obtained from a $k$-tree by adding a new vertex 
adjacent to every vertex of a $k$-clique is also a $k$-tree.
A subgraph of a $k$-tree is called a \df{partial $k$-tree}.
In our proofs we do not directly use a tree-decomposition of a graph but utilize a related concept, 
a tree-partition, which is defined next.

Given a graph $G$, a \df{tree-partition} of $G$ is a pair $\big(T, \{T_x : x\in V(T)\}\big)$ consisting of a 
tree $T$ and a partition of $V$ into sets $\{T_x : x\in V(T)\}$,
such that for every edge $(u, v) \in E$ one of the following holds:
(i)~$u, v \in T_x$ for some $x \in V(T)$, or (ii)~there is an edge $(x, y)$ of $T$ with $u \in T_x$ and $v \in T_y$.
The vertices of $T$ are called the \df{nodes} and the sets $T_x, x \in V(T)$ are called the \df{bags} of the tree-partition.

The following well-known result provides a tree-partition of a $k$-tree.

\begin{lemma}[Dujmovi{\'c} et al.~\cite{DMW05}]
	\label{lm:tpart}	
	There exists a rooted tree-partition $\big(T, \{T_x : x\in V(T)\}\big)$ of a $k$-tree $G$ such that
	\begin{itemize}
		\item for every node $x$ of $T$, the subgraph of $G$ induced by the vertices of $T_x$ is a
		connected partial $(k-1)$-tree;
		\item for every non-root node $x$ of $T$, if $y$ is a parent node of $x$ in $T$, then the set of
		vertices in $T_y$ having a neighbor in $T_x$ forms a clique of size $k$ in $G$.
	\end{itemize}	
\end{lemma}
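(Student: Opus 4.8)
The plan is to prove the lemma by induction on $n = |V(G)|$, undoing one step of the recursive construction of the $k$-tree at a time. For the base case $n = k$ we have $G = K_k$, which is itself a $(k-1)$-tree, so the tree-partition consisting of a single root node $r$ with $T_r = V(G)$ works: $G[T_r]$ is a connected partial $(k-1)$-tree and the second condition is vacuous. For the inductive step with $n > k$, reverse the last step of the construction: let $v$ be a vertex that was added last, so that $C_v := N_G(v)$ is a $k$-clique and $G' := G - v$ is again a $k$-tree (on $n-1 \ge k$ vertices). By induction $G'$ has a rooted tree-partition $(T', \{T'_x : x \in V(T')\})$ satisfying the two properties, and the remaining task is to re-insert $v$.

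The structural observation that drives the insertion is that, since $C_v$ is a clique and $T'$ is a tree (so has no triangle), the set of nodes $x$ with $T'_x \cap C_v \neq \emptyset$ is either a single node or a pair of nodes adjacent in $T'$. In the first case, $C_v \subseteq T'_x$ for one node $x$, and I would hang a new leaf node $x^*$ off $x$ with bag $T_{x^*} = \{v\}$, leaving everything else unchanged; then $G[T_{x^*}] = K_1$, the set of $T_x$-vertices adjacent to $v$ is exactly the $k$-clique $C_v$, and no other clique condition is affected because $v$ has no other neighbours. In the second case, let the two nodes be $x$ and its parent $y$ in $T'$, and write $A = C_v \cap T'_y$ and $B = C_v \cap T'_x$, both nonempty; here I would simply fold $v$ into the child bag, setting $T_x = T'_x \cup \{v\}$ and keeping $T'$ and all other bags. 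Two things need checking. First, $B$ is a clique of size at most $k-1$ (as $|A| \ge 1$) inside the connected partial $(k-1)$-tree $G[T'_x]$, and attaching $v$ to it keeps the induced subgraph connected and of treewidth at most $k-1$ — one sees the latter by adding one bag $B \cup \{v\}$ of size at most $k$ to a width-$(k-1)$ tree-decomposition of $G[T'_x]$ — so the first property survives. Second, the clique condition on the edge $(y,x)$: the set of $T_y$-vertices with a neighbour in $T_x$ grows at most by $A$, but every vertex of $A$ already has a neighbour in $B \subseteq T'_x$ (because $A \cup B = C_v$ is a clique), so in fact this set does not change and stays a $k$-clique; all other clique conditions are untouched since $v$'s neighbours all lie in $T'_y \cup T'_x$.

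I expect the second case to be the only real obstacle: one has to arrange simultaneously that merging $v$ into an existing bag does not destroy the ``connected partial $(k-1)$-tree'' property of that bag and that the size-$k$ clique condition on the parent edge of that bag is preserved. Both facts rest on $v$ being simplicial with exactly $k$ neighbours, which forces $C_v$ to split across the two bags into two complementary cliques, the child-side one having size strictly less than $k$ — leaving just enough room. Once these bookkeeping checks are verified the induction closes, and the tree-partition produced for $G$ inherits its root from the one for $G'$ (or is the single node, in the base case), as required.
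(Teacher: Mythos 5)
Your proof is correct, and the checks you single out are exactly the right ones: after removing a simplicial vertex $v$ of degree $k$, the bags of the inductively obtained tree-partition that meet the clique $N(v)$ pairwise correspond to equal or adjacent nodes of the tree, hence form a single node or a parent--child pair; hanging $\{v\}$ as a new leaf in the first case, and absorbing $v$ into the child bag in the second, preserves both properties, since $B$ is a clique of size at most $k-1$ (so the child bag stays a connected partial $(k-1)$-tree) and every vertex of $A$ already has a neighbour in $B$ (so the parent clique is unchanged). One minor gloss: when you add the decomposition bag $B\cup\{v\}$, you should attach it to a bag containing $B$, which exists by the standard fact that every clique of a graph lies in some bag of any tree-decomposition. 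However, this is a genuinely different route from the construction the paper relies on (following Dujmovi{\'c} et al.~\cite{DMW05}): there the tree-partition is built globally, by fixing a root $r$, taking the BFS layering from $r$, and letting the bags be the connected components induced by each BFS layer; the cited work then shows the quotient graph is a tree and that the bags and parent cliques have the stated properties. Your induction is more elementary and self-contained, but the BFS-based construction has an extra feature that the paper exploits downstream: in the proof of Theorem~\ref{thm:p3t-25} each bag lies within a single BFS layer of the planar $3$-tree and therefore induces an outerplanar graph, which is what allows the $2$-clique-colorable $5$-track layouts of Lemma~\ref{lm:base} to be plugged in. Your construction proves Lemma~\ref{lm:tpart} exactly as stated, but it would not substitute for the BFS version in that application, since your bags need not respect any layering.
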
	

Let us describe how one can obtain a tree-partition of a $k$-tree $G$ as in Lemma~\ref{lm:tpart}.
Fix an arbitrary vertex $r \in V$ and perform a breadth-first search in $G$ starting from $r$. For every $d \ge 0$ and every connected component induced by the vertices of $G$ at distance $d$ from the root, create a node of $T$ associated with a bag containing the vertices of the component. Two nodes are adjacent if the vertices of the corresponding bags are joined by at least one edge of $G$.
Dujmovi{\'c} et al.~\cite{DMW05} show that the constructed graph $T$ is indeed a tree, and that the vertices of each bag
form a connected subgraph of a $(k-1)$-tree.

\paragraph{Layerings and $H$-Partitions}

A generalization of a tree-partition is the notion of an $H$-partition.
An \df{$H$-partition} of a graph $G$ is a partition of $V(G)$ into disjoint \df{bags}, $\{A_x : x\in V(H)\}$ indexed by
the vertices of $H$,
such that for every edge $(u, v) \in E(G)$ one of the following holds:
(i)~$u, v \in A_x$ for some $x \in V(H)$, or (ii)~there is an edge $(x, y) \in E(H)$ with $u \in A_x$ and $v \in A_y$.
In the former case, $(u, v)$ is an \df{intra-bag} edge and in the latter case, it is an \df{inter-bag} edge.

%
A layering of a graph $G = (V, E)$ is an ordered partition $(V_0, V_1, \dots)$ of $V$ such that for
every edge $(v, w) \in E$, if $v \in V_i$ and $w \in V_j$, then $|i - j| \le 1$.
If $r$ is a vertex in a connected graph $G$ and $V_i = \{v \in V~|~\dist_G(r, v) = i\}$ for all $i > 0$, then
$(V_0, V_1, \dots)$ is called a \df{BFS-layering} of $G$. 
The \df{layered width} of an $H$-partition of a graph $G$ is the minimum integer $\ell$ such that
for some layering $(V_0, V_1, \dots)$ of $G$, we have $|A_x \cap V_i| \le \ell$ for every bag $A_x$ of the partition and every integer $i \ge 0$.

The following lemma is a key ingredient for our proof of the upper bound on the track number of planar graphs.

\begin{lemma}[Dujmovi{\'c} et al.~\cite{DJMMUW19}]
	\label{lm:hpart}	
	Every planar graph $G$ has an $H$-partition of layered width~$3$ such that $H$ is planar and has treewidth at most $3$.
	Moreover, there is such a partition for every BFS-layering of $G$.
\end{lemma}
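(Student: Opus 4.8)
The plan is to reduce to planar triangulations and then prove, by induction, a stronger statement phrased in terms of ``vertical paths'' and ``tripods''. For the reduction, observe that adding edges to $G$ (keeping the vertex set fixed) only makes the conclusion harder to obtain: if $G^+\supseteq G$ has an $H$-partition of layered width $3$ with $H$ planar of treewidth at most $3$, then the very same partition works for $G$, since every edge of $G$ is an edge of $G^+$ and hence intra- or inter-bag, and the layered width with respect to a fixed layering is unaffected. So, given the prescribed BFS-layering of $G$ from a root $r$, I would add edges to triangulate $G$ while keeping every distance from $r$ unchanged --- a standard layered-triangulation step that uses only edges inside a layer or between consecutive layers --- so that the BFS-layering from $r$ is preserved. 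It therefore suffices to prove the lemma when $G$ is a planar triangulation equipped with a BFS-layering from a root, which we may place on the outer face.

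Fix a BFS spanning tree $T$ of the triangulation, rooted at $r$. Call a path \emph{vertical} if it is a subpath of $T$; a vertical path meets each BFS-layer in at most one vertex. A \emph{tripod} is a vertex set that is the union of at most three pairwise disjoint vertical paths, so a tripod meets each layer in at most three vertices --- which is exactly the meaning of ``layered width $3$''. The core of the proof is the following claim, proved by induction on the number of enclosed vertices: if $C$ is a cycle of $G$ bounding a disc whose interior faces are faces of $G$, and the vertices of $C$ are covered by at most three pairwise disjoint tripods, then the vertices strictly inside $C$ can be partitioned into further tripods in such a way that, after contracting each tripod to a point, the quotient of the subgraph of $G$ induced by $C$ and its interior has treewidth at most $3$.

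To run the induction, if $C$ encloses no vertex there is nothing to do. Otherwise I would look at the interior faces incident to $C$: if some such face $uvw$ (with $uv\in C$) uses a vertex $w\notin C$, I build a new tripod $\tau$ by taking $w$ together with the part of its $T$-path lying inside $C$, extended by the appropriate subpaths hanging off $u$ and $v$ so that $\tau$ is a union of at most three vertical paths and $\tau\cup C$ cuts the disc into strictly smaller discs; if instead every face incident to $C$ reuses a vertex of $C$, then $C$ has a chord and I split along it. One then verifies that each sub-disc is bounded by a cycle whose vertices are covered by at most three tripods, taken from the at most three tripods on $C$ together with $\tau$; recursing and, for each sub-disc, forming a bag out of $\tau$ and the at most three tripods covering that sub-disc's boundary (at most four tripods in total), one assembles a tree-decomposition of the quotient with bags of size at most four, hence treewidth at most $3$. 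Planarity of the quotient is automatic, since it is a minor of $G$. The lemma follows by applying the claim with $C$ the outer triangle, whose three vertices form three one-vertex tripods.

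The step I expect to be the main obstacle is the treewidth bookkeeping: $\tau$ must be routed so that \emph{every} resulting sub-disc is still met by at most three tripods, which forces $\tau$ to ``reach'' each of the tripods currently bounding $C$; arguing that the BFS tree together with planarity always permits such a routing, and that the pieces assembled into $\tau$ really do constitute at most three disjoint vertical paths, is the delicate part of the argument. The remaining points --- disjointness of the tripods produced along the recursion, the layered-width count, and planarity of the quotient --- are routine once the carving scheme is fixed.
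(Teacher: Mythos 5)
This lemma is not proved in the paper at all --- it is quoted from Dujmovi\'c et al.~\cite{DJMMUW19} --- so the only meaningful comparison is with the proof in that reference, and your outline is recognisably the same tripod-decomposition argument: reduce to a plane triangulation whose BFS-layering agrees with the prescribed one, take vertical paths in a BFS tree, and recurse on discs whose boundary is covered by at most three tripods, producing bags of at most four tripods. The problem is that the one step you flag as ``the main obstacle'' is precisely the heart of that proof, and your proposal does not contain the idea that makes it go through. Your rule for choosing the new tripod $\tau$ (a face $uvw$ incident to $C$ with $w\notin C$, plus the part of $w$'s tree-path inside $C$, ``extended by appropriate subpaths hanging off $u$ and $v$'') can fail: the upward vertical path from $w$ may return to the \emph{same} boundary tripod that contains $u$ and $v$, in which case $C\cup\tau$ leaves a sub-disc bounded by all three old tripods together with $\tau$, i.e.\ four tripods, and the induction invariant is destroyed. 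Nothing in planarity or the BFS structure lets you ``extend'' $\tau$ to reach the other two boundary tripods while keeping it a union of at most three vertical paths unless you choose the starting face correctly.

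The missing device in \cite{DJMMUW19} is a Sperner-type argument. Colour every vertex of the closed disc with the index $i\in\{1,2,3\}$ of the boundary tripod first met by the vertical (BFS-tree) path going upward from that vertex; vertices of the $i$-th boundary tripod get colour $i$, so this is a Sperner colouring of the triangulated disc, and hence some internal face is trichromatic. The new tripod $\tau$ consists of that face together with the three upward vertical paths from its vertices, each stopped just before it reaches the boundary; by the choice of colours each leg of $\tau$ arrives at a \emph{different} boundary tripod, so the disc splits into three sub-discs, each bounded by $\tau$ and at most two of the old tripods, and the bag $\{\tau_1,\tau_2,\tau_3,\tau\}$ certifies treewidth at most $3$. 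Two smaller points also need attention: planarity of the quotient $H$ requires the bags to induce connected subgraphs (in \cite{DJMMUW19} the three legs of a tripod are joined through the trichromatic face of the triangulation; your definition of a tripod as a union of up to three pairwise disjoint vertical paths does not by itself give connectivity, so ``$H$ is a minor of $G$'' needs that extra structure), and the triangulation step preserving the given BFS-layering, while standard, does require an argument that every face can be triangulated using only chords joining vertices in the same or consecutive layers. With the Sperner step supplied, the rest of your outline (layered-width count, assembly of the tree-decomposition, disjointness along the recursion) is indeed routine.
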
	

\section{Planar Graphs of Bounded Treewidth}
\label{sect:bounded_treewidth}

In this section, we study track numbers of planar graphs of bounded treewidth. 
Our primary goal is improving the existing upper bound for planar 3-trees.

\begin{figure}[b]
	\centering
	\includegraphics[page=1,height=2cm]{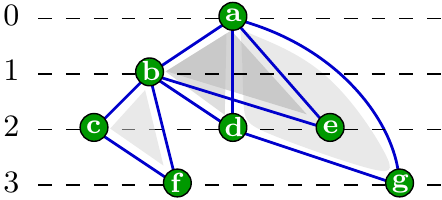}
	\caption{A $1$-clique-colorable $4$-track layout of a set of $3$-cliques.
		A nice order $\prec$ is defined as follows: 
		$\left\langle b, c, f \right\rangle \prec 
		 \left\langle a, b, d \right\rangle \prec
		 \left\langle a, d, g \right\rangle \prec 
		 \left\langle a, b, e \right\rangle$.}
	\label{fig:nicely}
\end{figure}

The following notion is introduced by Dujmovi{\'c} et al.~\cite{DMW05}; it is also implicit in the work of Di Giacomo et al.~\cite{GLM05}.
Let $\{V_i~:~1\le i \le t\}$ be a $t$-track layout of a $k$-tree, $G_k$. 
We say that a clique $C_1$ of $G_k$ \df{precedes} a clique $C_2$ of $G_k$ with respect to the track layout if
for every $1 \le i \le t$ and for all $u \in V_i \cap C_1$ and $w \in V_i \cap C_2$, it holds that $u \le_{i} w$; 
we denote the relation by $C_1 \prec C_2$.
Let $S = \{C_1, \dots, C_{|S|}\}$ be a set of maximal cliques of $G_k$. We say that $S$ is \df{nicely ordered} if $\prec$ is a total
order on $S$, that is, $C_i \prec C_j$ for all $1 \le i < j \le |S|$. Finally, for a track layout of $G_k$, we say that a given set of maximal cliques
is \df{$c$-colorable} if the set can be partitioned into $c$ nicely ordered subsets.
Such a track layout is called \df{$c$-clique-colorable}; see Figure~\ref{fig:nicely} for an illustration.

\pltopsep=0.5em
\plparsep=0.5em
\begin{lemma}
	\label{lm:base}
	The following holds:
	\begin{compactenum}[\bf(a)]
		\item \label{lm:base_path} 
		Every path admits a $1$-clique-colorable $2$-track layout.
		
		\item \label{lm:base_tree}
		Every tree admits a $2$-clique-colorable $3$-track layout.
		
		\item \label{lm:base_outer}
		Every outerplanar graph admits a $2$-clique-colorable $5$-track layout.
	\end{compactenum}	
\end{lemma}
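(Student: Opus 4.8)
The plan is to handle the three parts in increasing order of difficulty, using one elementary observation throughout: in any X-crossing-free track layout, and for any two tracks $V_i,V_j$, the set of all edges with one endpoint on $V_i$ and one on $V_j$ is automatically nicely ordered. (Send such an edge $e$ to the pair consisting of the rank of its $V_i$-endpoint and the rank of its $V_j$-endpoint: this map is injective, $\prec$ agrees with the coordinatewise order on the images, and two images incomparable in that order would constitute an X-crossing between the two edges.) A companion fact: the edge sets of two track-pairs $\{i,j\}$ and $\{i,k\}$ can be merged into a single nicely ordered set exactly when no vertex of $V_i$ has a neighbor on $V_j$ as well as a neighbor on $V_k$ (otherwise $\prec$ fails to be antisymmetric on the two edges through such a vertex; conversely the two chains interleave cleanly along $V_i$). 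Since for a path or a tree, viewed as a $1$-tree, the maximal cliques are exactly the edges, ``$c$-clique-colorable'' for these classes just means the edge set is covered by $c$ nicely ordered sets, so it suffices to produce a layout whose at most three pair-classes merge down to the prescribed number.

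For \ref{lm:base_path}, place the $i$-th vertex of the path on track~$1$ if $i$ is odd and on track~$2$ if $i$ is even, keeping the path order on each track. Consecutive vertices of the path lie on different tracks and consecutive vertices of a track are non-adjacent, so the layout is proper; the $i$-th edge maps to a (track-$1$, track-$2$)-position that is monotone nondecreasing in $i$, so there are no X-crossings; and every edge uses the single track-pair $\{1,2\}$, so that one pair-class is the whole edge set and is nicely ordered.

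For \ref{lm:base_tree} the construction is recursive on the rooted tree: root $T$, order the subtrees hanging from each vertex, and build the layout while carrying as an invariant not only the $3$-track layout and the partition of its edges into two nicely ordered sets, but also positional data (which track the root uses, and that it is extremal there) enabling the subtree layouts to be translated and stacked without creating X-crossings or new pair-conflicts. The point of the strengthened invariant is to keep the three potential pair-classes mergeable down to two even when $T$ is deep and of high degree: a plain ``depth mod $3$'' placement fails this, and so does forcing all edges onto only two track-pairs (already the complete ternary tree of depth~$3$ has no such $2$-pair $3$-track layout, since its only independent vertex covers are the two color classes of the bipartition and neither split admits a consistent order). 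For \ref{lm:base_outer} I would pass to maximal outerplanar graphs (it suffices to treat these, since restriction to a subgraph preserves $2$-clique-colorability), then recurse through the structure --- along a BFS-layering, whose layers induce disjoint unions of paths, or equivalently along the weak dual tree of the triangulation, adding one ``ear'' at a time --- laying out each path-like layer via \ref{lm:base_path} and assembling five tracks; here the maximal cliques are triangles, so one must in addition route the three vertices of every triangle onto a controlled triple of tracks and keep the triangles sharing each triple totally ordered, grouped into two nicely ordered families.

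The main obstacle is exactly this coordination between the layout and its $2$-class clique-coloring. For \ref{lm:base_tree} it lies in finding the right strengthened induction hypothesis --- enough positional control to stack subtrees X-crossing-freely, yet weak enough to be re-established after inserting the root --- while never exceeding two clique-colors, which is delicate precisely because layered placements are unavailable for deep-and-wide trees and the two colors must be produced by merging pair-classes rather than by a priori limiting which pairs are used. For \ref{lm:base_outer} the difficulty compounds: unlike edges, triangles are not automatically nicely ordered in an X-crossing-free layout, so the five-track assembly must be engineered so that the triangles fall into just two chains under~$\prec$, which constrains both the track triple each triangle occupies and how triangles from different layers interleave. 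I expect \ref{lm:base_outer} to be the bulk of the work, and in all three parts the binding constraint to be the $2$-clique-colorability rather than the bare track number.
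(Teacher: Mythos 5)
Your proposal does not prove parts (\ref{lm:base_tree}) and (\ref{lm:base_outer}); it only outlines plans whose key steps (the ``strengthened induction hypothesis'' for trees, the five-track assembly with controlled triangle triples for outerplanar graphs) are never constructed. More importantly, the ``companion fact'' that drives you away from the simple construction is wrong: you claim that the edge sets of two track-pairs $\{i,j\}$ and $\{i,k\}$ cannot be merged when some vertex of $V_i$ has neighbors on both $V_j$ and $V_k$, because $\prec$ would fail antisymmetry on the two edges through that vertex. But the relation ``precedes'' is defined with $\le$, so cliques sharing vertices may precede each other in both directions, and a nicely ordered set is just an enumeration $C_1,\dots,C_{|S|}$ with $C_i\prec C_j$ for $i<j$; shared vertices are explicitly allowed (in Figure~\ref{fig:nicely} the nicely ordered cliques share $a$, $b$ and $d$). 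In fact, in any X-crossing-free layout the set of \emph{all} edges incident to one fixed track is nicely ordered (sort by the endpoint on that track, breaking ties arbitrarily), which strictly strengthens your first observation and kills the purported obstruction.

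This matters because the ``plain depth mod $3$'' placement you dismiss is exactly the paper's proof of (\ref{lm:base_tree}): draw the tree with BFS layers, wrap via Lemma~\ref{lm:wrap} with $\tr{v}=y(v)\bmod 3$, and take $S_1$ to be the edges incident to track $1$ and $S_2$ the remaining edges (all between tracks $2$ and $0$); both are nicely ordered by the observation above, so no recursion or stacking invariant is needed, and your counter-assertion that this fails is false. For (\ref{lm:base_outer}) you are also missing the decisive ingredient: the paper takes the known outerplanar drawing of Dujmovi{\'c} et al.\ in which every edge has $1\le|y(u)-y(v)|\le 2$, wraps it onto $5$ tracks by $y\bmod 5$, and splits the triangular faces into those containing a track-$2$ vertex and the rest (each of the latter containing a vertex on track $0$ and one on track $4$); nice orders are then obtained by repeatedly extracting the leftmost shared vertex, using planarity. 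Your BFS/ear-decomposition sketch identifies the difficulty (triangles are not automatically nicely ordered) but offers no mechanism to resolve it, so as it stands the argument for both (\ref{lm:base_tree}) and (\ref{lm:base_outer}) is incomplete and rests on a false structural claim.
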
	

\begin{proof}
	Claim (\ref{lm:base_path}) of the lemma is straightforward, as every $2$-track layout of a path is $1$-clique-colorable. 
	Now we prove~(\ref{lm:base_tree}).
	Consider a plane drawing of a tree, $T$, such that the vertices having the same distance from the root are drawn on the same horizontal line;
	see Figure~\ref{fig:cc1}. Let $x(v)$ and $y(v)$ denote $x$ and $y$ coordinates of a vertex $v \in V(T)$.
	The drawing corresponds to a track layout with maximum span $1$, which by Lemma~\ref{lm:wrap} can be converted into a $3$-track layout by
	assigning $\tr{v} = y(v)~(\bmod~3)$ for every vertex $v$.
	The maximal cliques in the graph are edges of $T$. We partition the edges into $S_1 = \{(u, v) \in E(T) : \tr{u} = 1 \text{~or~} \tr{v} = 1\}$ and 
	$S_2 = E(T) \setminus S_1$. It is easy to verify that the two sets of edges are nicely ordered with respect to the $3$-track layout.
	
	In order to prove (\ref{lm:base_outer}), we utilize a $5$-track layout of an outerplanar graph suggested by Dujmovi{\'c} et al.~\cite{DPW04}; see also~\cite{ABGKP18}. 
	They prove that every maximal outerplanar graph, $G$, has a straight-line outerplanar drawing in which vertex coordinates are integers, and the absolute value of the difference of the $y$-coordinates of the endvertices of each edge of $G$ is either one or two; see Figure~\ref{fig:cc2}. That is, $1 \le |y(v) - y(u)| \le 2$ for all
	$(u, v) \in E$.	Such a drawing defines a track layout in which vertices with the same $y$-coordinates
	form a track and the ordering of the vertices within each track is implied by the drawing.
	The layout may have many tracks but maximum edge span is $2$.
	By Lemma~\ref{lm:wrap}, the layout is wrapped onto $5$ tracks by assigning $\tr{v} = y(v)~(\bmod~5)$ and ordering vertices within a track lexicographically by $\big(\lfloor y/5 \rfloor, x\big)$.
	
	The maximal cliques in $G$ are triangular faces. Denote a face containing vertices $u, v, w$ by $\left\langle u, v, w \right\rangle $ and the set of all faces in $G$ by $F$. We partition $F$ into two sets
	$S_1 = \{\left\langle u, v, w \right\rangle \in F : \tr{u} = 2 \text{~or~} \tr{v} = 2 \text{~or~} \tr{w} = 2\}$ and $S_2 = F \setminus S_1$; see Figure~\ref{fig:cc2} in which members of $S_2$ are shaded. All the faces of $S_1$ contain a vertex in track $2$; choose the leftmost (having smallest $x$-coordinate) such vertex
	$v^*$ with $\tr{v^*} = 2$. The faces containing $v^*$ can be nicely ordered with respect to the track layout, as the drawing is planar.
	Removing $v^*$ from the layout and applying the same argument for the remaining faces, yields a nice order of $S_1$.	
	An analogous procedure can be utilized to construct a nice order of $S_2$, since all those faces contain a vertex in track $0$ and a vertex in track $4$.
\end{proof}

\begin{figure}[t]
	\centering
	\begin{subfigure}[b]{\linewidth}
		\centering
		\includegraphics[page=3,height=2.7cm]{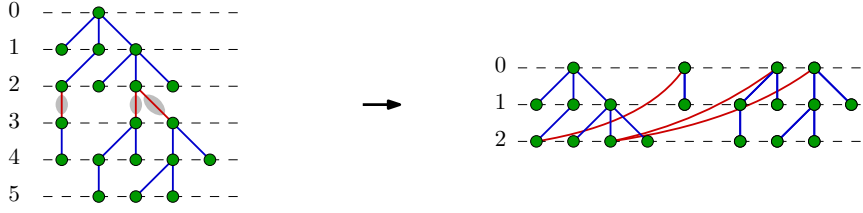}
		\caption{A track layout of a tree wrapped onto a $2$-clique-colorable $3$-track layout. 
			Edges ($2$-cliques) are partitioned into two nice orders, blue and shaded red.\\}
		\label{fig:cc1}
	\end{subfigure}
	\begin{subfigure}[b]{\linewidth}
		\centering
		\includegraphics[page=5,height=5cm]{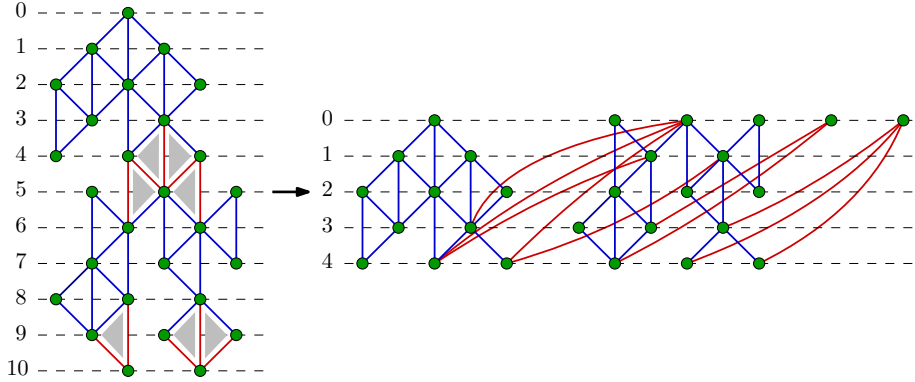}
		\caption{A track layout of an outerplanar graph wrapped onto a $2$-clique-colorable $5$-track layout. 
			Faces ($3$-cliques) are partitioned into $2$ nice orders; faces of one part are shaded.}
		\label{fig:cc2}
	\end{subfigure}
	\caption{An illustration for Lemma~\ref{lm:base}}
	\label{fig:cc}
\end{figure}

Notice that the claims of Lemma~\ref{lm:base} are tight in terms of clique-colorability: There is no constant $t$ such that
every tree admits a $1$-clique-colorable $t$-track layout.
The next lemma shows how to use clique-colorable track layouts for graphs of bounded treewidth.

\begin{lemma}
	\label{lm:tc}
	Assume that every $k$-tree admits a $c$-clique-colorable $t$-track layout. 
	Then every $(k+1)$-tree admits a layout on $t \cdot (2c+1)$ tracks.
\end{lemma}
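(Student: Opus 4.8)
The plan is to use the tree-partition of a $(k{+}1)$-tree provided by Lemma~\ref{lm:tpart} as the scaffold, together with the hypothesized $c$-clique-colorable $t$-track layout applied \emph{inside} each bag. Recall that in the rooted tree-partition $(T,\{T_x\})$ of a $(k{+}1)$-tree $G$, each bag $T_x$ induces a connected partial $k$-tree, hence (by completing it to a $k$-tree and restricting) admits a $c$-clique-colorable $t$-track layout by hypothesis; moreover, for every non-root node $x$ with parent $y$, the neighbours of $T_x$ inside $T_y$ form a $(k{+}1)$-clique $C_x$ of $G$, which lies entirely in $T_y$ and is therefore one of the cliques that the $c$-clique-colouring of the layout of $G[T_y]$ assigns to one of its $c$ nicely ordered colour classes. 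First I would set up a two-dimensional track index for $G$: the first coordinate $i_1$ will encode (the BFS level in $T$ of) the bag, and the second coordinate will be a pair $(j,\ell)\in\{1,\dots,c+?\}\times\{1,\dots,t\}$ recording a ``role'' together with the internal track number $\tr{v}$ inside the bag's $t$-track layout.

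The key steps, in order: (1) For each bag $T_x$, fix a $c$-clique-colourable $t$-track layout of $G[T_x]$ with colour classes $S^x_1,\dots,S^x_c$, each nicely ordered. (2) When $x$ is a child of $y$, the connector clique $C_x\subseteq T_y$ belongs to some class $S^y_{j(x)}$; colour the \emph{node} $x$ of $T$ with this index $j(x)\in\{1,\dots,c\}$ (the root gets an arbitrary colour). Because each $S^y_j$ is a \emph{totally} $\prec$-ordered set of cliques, we can further order the children of $y$ of a fixed colour $j$ consistently with $\prec$ on their connector cliques. (3) Assign the first track coordinate $i_1 = \mathrm{level}_T(x)$ to every vertex of $T_x$, so that every inter-bag edge of $G$ has partial span exactly $1$. (4) Within a fixed level $i_1$, we still have many bags; separate them by their colour $j\in\{1,\dots,c\}$ and, inside one colour, concatenate the $t$-track layouts of the bags in the order dictated by step~(2). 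This uses $c\cdot t$ tracks per level in the naive count, and then the wrapping Lemma~\ref{lm:wrap} with $b_2 = c\cdot t$ and span $s=1$ would give $3ct$ — but that is worse than the claimed $t(2c+1)$, so the grouping has to be sharper.

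The sharpening — and the main obstacle — is to show the partial span can be kept at $1$ while using only $(2c+1)$ "slots" per internal track rather than $c\cdot t$. The trick is to not give each bag its own block of $t$ tracks, but to \emph{overlay} bags of the same parent onto the same $t$ internal tracks, using the nice ordering to avoid X-crossings: all children-bags attached through connector cliques in class $S^y_j$ can share internal track $\ell$ with each other \emph{and} with the vertices of $T_y$ that lie on internal track $\ell$, provided we interleave them in an order compatible with $\prec$. So the real index set is $\{1,\dots,t\}$ for the internal track, times a small set that distinguishes a bag from its parent's bag well enough to prevent crossings between the inter-bag edges of sibling subtrees; a parity/colour argument on $T$ (each node coloured by the $\prec$-class $j(x)$ of its connector clique, giving $c$ values, plus one bit for "this bag vs. the parent bag") yields $2c+1$ values, whence $\tn(G)\le t\cdot(2c+1)$ directly, without even invoking wrapping. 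I would verify the no-X-crossing condition separately for intra-bag edges (inherited from the layout of $G[T_x]$, same $i_1$), for inter-bag edges between a bag and its parent (here the nice ordering of $C_x$ within $S^y_{j(x)}$, shared with the children ordering, is exactly what rules out a crossing), and for the — impossible — case of two edges in bags at incomparable $T$-positions (different $i_1$ on at least one side, or the colour bit separates them). The delicate point to get right is that when two children $x,x'$ of $y$ have connector cliques in the \emph{same} class $S^y_j$, the $\prec$-order on $C_x,C_{x'}$ must be made to agree simultaneously on \emph{all} $k{+}1$ of their shared tracks with $T_y$; this is where the totality of $\prec$ on $S^y_j$ (not merely a partial order) is essential, and I expect the bulk of the write-up to be the careful bookkeeping that turns this agreement into the absence of X-crossings.
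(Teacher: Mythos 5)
Your setup coincides with the paper's: take the tree-partition of Lemma~\ref{lm:tpart}, fix a $c$-clique-colorable $t$-track layout inside each bag, and color each node $x$ of $T$ by the color class of its parent clique. But the step that actually produces the factor $2c+1$ is missing. Your fully specified construction (first coordinate $=$ level of the bag in $T$, hence span $1$; second coordinate $=$ color times internal track, hence $b_2 = ct$) only yields $3ct$ via Lemma~\ref{lm:wrap}, as you note. The ``sharpening'' you then propose --- a direct labeling of bags by ``$c$ colors plus one bit for this bag vs.\ the parent bag,'' claimed to give $2c+1$ values ``without even invoking wrapping'' --- is not a proof: the count $c$ colors $\times$ one bit does not equal $2c+1$, the labeling rule is never pinned down, and, more importantly, the no-X-crossing verification is only sketched for a parent and its children. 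With any such local labeling, bags that are far apart in $T$ (e.g.\ a node and a grandchild, or two bags in disjoint subtrees) will share labels, and edges from those bags toward their respective parents can have both endpoint-pairs on shared tracks; ruling out a reversal there requires a globally consistent order of bags on each track, which is exactly the bookkeeping you defer and which a parity-type argument does not supply.

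The paper resolves this with the opposite trade-off from the one you chose. Instead of forcing the $T$-dimension to have span $1$, it builds a track layout $\Th(T)$ of the tree $T$ itself in which a child $x$ of $y$ is placed on track $\tr{y}+c(x)$, with children of the same color ordered by the nice order of their parent cliques. Thus the color is encoded as an \emph{offset} in the unbounded $T$-coordinate: same-parent children of different colors are automatically separated, same-color children are ordered consistently with their parent cliques (this is where nice ordering is used, exactly as you anticipate), and the maximum partial span in that coordinate is $c$, not $1$. The bounded coordinate is just the internal track, of size $t$. Applying Lemma~\ref{lm:wrap} with $s=c$ and $b_2=t$ gives $(2c+1)\cdot t$; the number $2c+1$ is the wrapping modulus, and the wrapping lemma is precisely what handles the interaction between label-sharing bags at different depths of $T$ that your direct labeling leaves unaddressed (in the paper's layout a node and its grandchild differ in track by an amount in $\{2,\dots,2c\}$, hence land on distinct wrapped tracks --- a separation your parity bit does not provide). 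To repair your write-up you would either have to reprove a version of the wrapping argument by hand for your overlay scheme, or simply adopt the offset-by-color construction and cite Lemma~\ref{lm:wrap}.
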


\begin{proof}
	Let $\big(T, \{T_x : x \in V(T)\}\big)$ be a tree-partition of a $(k+1)$-tree, $G_{k+1}$, given by Lemma~\ref{lm:tpart}.
	In order to construct a desired track layout of $G_{k+1}$, consider a vertex $v \in V(G_{k+1})$ that belongs to a bag $x \in V(T)$ in the tree-partition.
	The track of $v$ is defined by two indices, $(i_v, j_v)$, where $i_v$ is derived from a track layout of $T_x$ and $j_v$ is derived from a 
	certain track layout of the tree $T$. Next we define $i_v$ and $j_v$.
	
	\begin{itemize}
		\item By Lemma~\ref{lm:tpart}, vertices of every bag of the tree-partition form a connected partial $k$-tree, 
		which is a subgraph of a $k$-tree~\cite{Bod98}.
		Thus, $T_x$ admits a $c$-clique-colorable $t$-track layout,
		which we denote by $\Th(T_x)$. The first index, $i_v$, is the track of $v$ in $\Th(T_x)$.
		Clearly the index ranges from $1$ to $t$.
		
		\item Consider a parent node, $y \in V(T)$, of node $x$ in the tree-partition. By Lemma~\ref{lm:tpart}, vertices of $T_y$ adjacent to a vertex of $T_x$ form a maximal clique, which we call the \df{parent clique} of $x$. The vertices of $T_y$ form a 
		partial $k$-tree, which by the assumption of the lemma admits a $c$-clique-colorable $t$-track layout, $\Th(T_y)$. Thus, the
		parent clique of $x$ has an assigned color, $c(x)$, ranging from $1$ to $c$, and
		cliques with the same color are nicely ordered in $\Th(T_y)$.
		
		We layout $T$ on (possibly many) tracks such that, for every parent node $y$, its child
		nodes are on $c$ consecutive tracks. Formally, if $x$ is a child of $y$ and the corresponding parent clique of $x$ has color $1 \le c(x) \le c$, then
		$\tr{x} = \tr{y} + c(x)$. The order of child nodes having the same color follows the nice order of the corresponding parent cliques.
		The constructed track layout of $T$ is denoted $\Th(T)$.
		
		We assign $j_v$ to be the track of node $x$ in $\Th(T)$. The index can be as large as $\Omega(|V(T)|)$ but the maximum
		span of an edge of $E(T)$ is $c$.
	\end{itemize}
	
	Now we define the order of the vertices of $G_{k+1}$ within the same track. If two vertices, $u$ and $v$, belong to the same bag
	$x \in V(T)$ in the tree-partition, their relative order is inherited from track layout $\Th(T_x)$. If the vertices are
	in different bags, that is, $u \in T_x, v \in T_y$ for some $x \in V(T), y \in V(T)$, then the order is dictated by the order of nodes $x$ and $y$ in track layout $\Th(T)$.
	
	Next we show that the constructed track assignment has no X-crossings. Intra-bag edges do not form X-crossings by the assumption of the lemma.
	Consider two inter-bag edges, $(u_1, v_1) \in E(G_{k+1})$ and $(u_2, v_2) \in E(G_{k+1})$. Since there are no crossings in $\Th(T)$, the
	inter-bag edges mapped to edges of $T$ without a common parent are not in an X-crossing. Thus we may assume that $u_1 \in T_p, u_2 \in T_p, v_1 \in T_x, v_2 \in T_y$
	for some nodes $p, x, y \in V(T)$ such that $p$ is a parent of $x$ and $y$.
	If parent cliques of $x$ and $y$ are of different colors, then by construction of $\Th(T)$, vertices $v_1$ and $v_2$ are in different tracks.
	If parent cliques of $x$ and $y$ are of the same color, then the order between $v_1$ and $v_2$ is consistent with the order between 
	$u_1$ and $u_2$, since the cliques are nicely ordered in $\Th(T_p)$. Therefore, an X-crossing between $(u_1, v_1)$ and $(u_2, v_2)$ is impossible.
	
	Finally, we apply Lemma~\ref{lm:wrap} for the constructed two-dimensional track layout with maximum partial span $c$ to get the
	desired claim.
\end{proof}

By combining Lemma~\ref{lm:tc} with Lemma~\ref{lm:base}, we can get an improved upper bound for the track number of planar $3$-trees. \begin{theorem}
	\label{thm:p3t-25}
	The track number of a planar $3$-tree is at most $25$.
\end{theorem}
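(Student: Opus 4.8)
The plan is to invoke Lemma~\ref{lm:tc} with $k=2$, $c=2$, $t=5$, which would immediately give $\tn(G)\le 5\cdot(2\cdot 2+1)=25$ for every planar $3$-tree $G$. Lemma~\ref{lm:tc} is phrased as requiring that \emph{every} $2$-tree admit a $2$-clique-colorable $5$-track layout, but its proof invokes this hypothesis only for the bags of the tree-partition of $G$ produced by Lemma~\ref{lm:tpart} and for their parent bags. Since that tree-partition is the one obtained from a breadth-first search, each such bag is a connected component of a layer $V_d=\{v:\dist_G(r,v)=d\}$ of the BFS layering from some fixed root $r$. Hence it suffices to prove that $G[V_d]$ is outerplanar for every $d\ge 1$ (the layer $V_0=\{r\}$ is a single vertex): then Lemma~\ref{lm:base}(\ref{lm:base_outer}) supplies the $2$-clique-colorable $5$-track layouts needed for every bag, and the argument in the proof of Lemma~\ref{lm:tc} produces a $25$-track layout of $G$.

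To handle the layers, I would set $S=V_0\cup V_1\cup\dots\cup V_{d-1}$. Then $G[S]$ is connected, since a shortest path from $r$ to any vertex of $S$ uses only vertices of distance $\le d-1$ and so stays inside $S$; and $V_d$ is exactly the set $N(S)$ of vertices outside $S$ having a neighbour in $S$, because a vertex at distance greater than $d$ has all its neighbours at distance at least $d$. This reduces the theorem to the following claim, which I would state and prove separately: \emph{if $H$ is a planar graph and $S\subseteq V(H)$ induces a connected subgraph, then $H\big[N(S)\big]$ is outerplanar.}

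For the claim I would use the forbidden-minor characterisation of outerplanar graphs, namely having no $K_4$ minor and no $K_{2,3}$ minor. Suppose $H[N(S)]$ had a $K_4$ minor; its four branch sets lie in $N(S)$, each is connected in $H$ and adjacent to $S$, so contracting the branch sets together with $S$ yields a $K_5$ minor of $H$, contradicting planarity. Likewise, a $K_{2,3}$ minor of $H[N(S)]$ would, after contracting its five branch sets and $S$, produce as a minor of $H$ the graph obtained from $K_{2,3}$ by adding one vertex adjacent to all five other vertices; but $K_{2,3}$ drawn in the plane has no face incident to all of its vertices, so this graph is non-planar, again a contradiction. (More concretely, one can also argue by embedding: deleting the connected set $S$ from a planar drawing of $H$ merges all faces incident to $S$ into a single face whose boundary contains every vertex of $N(S)$; deleting the remaining vertices of $V(H)\setminus\big(S\cup N(S)\big)$ only enlarges that face, leaving $H[N(S)]$ drawn with all its vertices on the boundary of one face, i.e.\ outerplanar.)

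The only real obstacle is this claim, and within it the exclusion of a $K_{2,3}$ minor; everything else is bookkeeping with Lemmas~\ref{lm:tpart},~\ref{lm:tc} and~\ref{lm:base}. I note that the claim holds for arbitrary planar graphs, so the same argument shows more generally that every layer of a BFS layering of any planar graph induces an outerplanar subgraph; the $3$-tree hypothesis enters only through Lemma~\ref{lm:tpart}, which guarantees that these components form a tree-partition with the structure required to run Lemma~\ref{lm:tc}.
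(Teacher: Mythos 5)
Your proposal is correct and takes essentially the same route as the paper's proof: build the BFS-based tree-partition of Lemma~\ref{lm:tpart}, note that each bag is a connected outerplanar graph and hence has a $2$-clique-colorable $5$-track layout by Lemma~\ref{lm:base}(\ref{lm:base_outer}), and run the argument of Lemma~\ref{lm:tc} to get $5\cdot(2\cdot 2+1)=25$ tracks. The only difference is that you prove the outerplanarity of BFS layers of a planar graph from scratch (via excluded minors, or the face-merging embedding argument), whereas the paper simply cites this fact from~\cite{ABGKP18}; your handling of the mismatch between the hypothesis of Lemma~\ref{lm:tc} (all $2$-trees) and what is actually available (outerplanar bags) matches the paper's ``apply the arguments of Lemma~\ref{lm:tc}'' step.
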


\begin{proof}
	First we build a tree-partition of a planar $3$-tree $G$ as in Lemma~\ref{lm:tpart}.
	To this end, as mentioned in Section~\ref{sect:prel}, pick a root $r \in V$ and perform a 
	breadth-first search in $G$ starting from $r$. Bags of the tree-partition are formed by connected 
	components induced by the vertices of $G$ at the same distance from the root.
	
	Since $G$ is planar, vertices at the same distance from the root induce an outerplanar graph~\cite{ABGKP18}.
	Hence, the vertices of each bag of the tree-partition form a connected outerplanar graph, which by Lemma~\ref{lm:base}
	admits a $2$-clique-colorable $5$-track layout.
	Applying the arguments of Lemma~\ref{lm:tc} to a planar $3$-tree $G$ yields the claim of the theorem.
\end{proof}	

Note that a similar construction provides an upper bound of $15$ for planar $2$-trees (series-parallel graphs), as
every bag of a corresponding tree-partition induce a tree, which admits a $2$-clique-colorable $3$-track layout.
The same upper bound is already known~\cite{GLM05}.

\section{General Planar Graphs}

In this section we investigate upper and lower bounds on the track number of general planar graphs.

\subsection{An Upper Bound}
\label{sect:general_upper}

Recently Dujmovi{\'c} et al.~\cite{DJMMUW19} used $H$-partitions of bounded layered width to prove that the
queue number of a planar graph is a constant. By analogy with their result, we show that the track
number of a graph is bounded by a function of the track number of $H$ and the layered width.

\begin{lemma}
	\label{lm:main}
	If a graph $G$ has a layered $H$-partition of layered width $\ell$, then $G$ has track number
	at most $3 \ell \cdot \tn(H)$.
\end{lemma}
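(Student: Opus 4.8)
The plan is to combine the given layered $H$-partition with a track layout of $H$ to produce a three-dimensional track assignment of $G$, and then collapse it via the wrapping lemma (Lemma~\ref{lm:wrap}). Fix a layering $(V_0, V_1, \dots)$ of $G$ witnessing the layered width $\ell$, and a $\tn(H)$-track layout of $H$; write $h(x)$ for the track of node $x \in V(H)$ in that layout, and $<_h$ for the order within a track. For a vertex $v$ of $G$ lying in bag $A_x$ and layer $V_i$, I will assign it the triple of indices $(i,\; h(x),\; \pi(v))$, where the first coordinate $i$ is the "span" dimension, the second coordinate ranges over $\{1, \dots, \tn(H)\}$, and the third coordinate $\pi(v) \in \{1, \dots, \ell\}$ is an arbitrary injective labelling of the (at most $\ell$) vertices of $A_x \cap V_i$. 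Within a track, vertices are ordered first by the $<_h$-order of their bags (vertices of the same bag kept together), breaking ties by $\pi$; this is well-defined because two vertices sharing all three indices lie in the same bag $A_x$ and the same layer, so at most $\ell$ of them exist and $\pi$ separates them.

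Next I would check that this is a genuine track assignment with small partial span and no X-crossings. Since $(V_0, V_1,\dots)$ is a layering, every edge of $G$ has $|i - j| \le 1$, so the partial span in the first coordinate is at most $1$. For an edge $(u,v)$ with $u \in A_x, v \in A_y$: if it is intra-bag then $x = y$, and then either the two endpoints lie in the same layer (impossible, since $H$-partition bags could a priori contain edges within a layer — here I need that a layering forbids... actually intra-bag edges with $u,v$ in the same layer are allowed, but then they share the first two coordinates and differ only in $\pi$, which I handle by noting such edges go to the same pair of tracks only if both endpoints are in the same track, contradicting properness — so I must instead route intra-bag same-layer edges carefully) or they lie in consecutive layers and automatically land on different tracks. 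If $(u,v)$ is inter-bag, then $(x,y) \in E(H)$, so $h(x) \ne h(y)$, hence $u$ and $v$ occupy different tracks. For the X-crossing check: two edges on the same pair of tracks $T, T'$ have all their endpoints with fixed second coordinates, say $h_T$ on $T$ and $h_{T'}$ on $T'$; the bags involved on side $T$ all have $H$-track $h_T$ and on side $T'$ all have $H$-track $h_{T'}$, and since the $H$-layout has no X-crossings, the induced order on bags is consistent across the two tracks; within a single bag-pair the $\pi$-tie-break is a fixed assignment, so no X-crossing arises there either. Hence the assignment is a valid track layout.

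Finally, apply Lemma~\ref{lm:wrap} with $D = 3$, $b_2 = \tn(H)$, $b_3 = \ell$, and maximum partial span $s = 1$ in the first coordinate, obtaining $\tn(G) \le (2 \cdot 1 + 1)\cdot \tn(H) \cdot \ell = 3\ell \cdot \tn(H)$, as claimed.

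I expect the main obstacle to be the bookkeeping around \emph{intra-bag edges lying in the same layer}: an $H$-partition permits edges inside a bag $A_x$, and two such endpoints may sit in the same layer $V_i$, so they would receive the same first two coordinates and be separated only by $\pi$, which does not by itself prevent them from sharing a track or from creating X-crossings with other intra-bag edges. The fix I would pursue is to treat each bag's intra-layer part $A_x \cap V_i$ (of size $\le \ell$) as a clique-like block and give its $\ell$ vertices the $\ell$ distinct third-coordinate values in a fixed order, then argue that intra-bag edges are confined within a single such block and never cross edges of another block because blocks occupy disjoint, consecutively-ordered positions on each track; combined with the observation that any set of $\le \ell$ mutually adjacent vertices placed on $\ell$ distinct tracks has no room for an X-crossing, this closes the gap. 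Getting the ordering convention exactly right so that all three edge types — intra-bag same-layer, intra-bag adjacent-layer, and inter-bag — simultaneously avoid X-crossings is the delicate part; everything after that is a direct appeal to Lemma~\ref{lm:wrap}.
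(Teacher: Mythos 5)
Your construction coincides with the paper's own proof: the index triple consisting of the layer, the track of the bag in a fixed layout $\Th(H)$, and an injective label in $\{1,\dots,\ell\}$ within each $A_x\cap V_i$; the within-track order taken from the order of bags in $\Th(H)$; and a final application of Lemma~\ref{lm:wrap} with partial span $1$ in the layer coordinate and $b_2=\tn(H)$, $b_3=\ell$. The problem is in the verification, exactly at the point you yourself flag as the ``delicate part''. Your justification of well-definedness is false: two vertices of $G$ sharing all three indices cannot lie in the same bag --- they necessarily lie in \emph{different} bags sitting on the same track of $\Th(H)$, and it is the order of those bags in $\Th(H)$ (not $\pi$) that separates them. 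The clean observation, which the paper states explicitly and which dissolves your worry, is that all vertices of a single bag $A_x$ occupy pairwise distinct tracks: vertices of $A_x$ in different layers differ in the layer index, and vertices of $A_x\cap V_i$ differ in the third index because $\pi$ is injective on that set of size at most $\ell$. Consequently intra-bag same-layer edges are automatically proper (there is nothing to ``route carefully''), each track meets each bag in at most one vertex so the within-track order is simply the bag order, and the ``fix'' you sketch at the end is not a fix but a restatement of the assignment you already defined.

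With that observation the X-crossing check, which you leave at the level of ``the induced order on bags is consistent'', becomes short and complete. Take two edges between the same pair of tracks $T,T'$, whose bags have $H$-tracks $h_T$ and $h_{T'}$. If $h_T=h_{T'}$, both edges are intra-bag (an inter-bag edge here would project to an edge of $H$ with both endpoints on one track of $\Th(H)$, impossible in a track layout), the two edges come from distinct bags, and their endpoints appear on $T$ and on $T'$ in the same bag order, so they do not cross. If $h_T\neq h_{T'}$, both edges are inter-bag and project to two edges of $H$ between tracks $h_T$ and $h_{T'}$ with distinct endpoints, so an X-crossing in the assignment of $G$ would force an X-crossing in $\Th(H)$. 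This is precisely the paper's case analysis; once it is written out, your appeal to Lemma~\ref{lm:wrap} correctly yields $\tn(G)\le (2\cdot 1+1)\cdot\tn(H)\cdot\ell=3\ell\cdot\tn(H)$.
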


\begin{proof}
	Assume $G$ has a layered $H$-partition of width $\ell$, and suppose $\Th(H)$ is a layout of $H$ 
	on $\tn(H)$ tracks. To define a track assignment of $G$, $\Th(G)$, consider a vertex $v \in V(G)$ that belongs to a bag $x \in V(H)$.
	The track of $v$ is defined by three indices, $(i_v, j_v, d_v)$.
	\begin{itemize}
		\item The first index, $i_v$, is the track of $x$ in track layout $\Th(H)$; it ranges from $1$ to $\tn(H)$. 
		\item The bag $x \in V(H)$ contains at most $\ell$ vertices in every layer. Label these
		vertices arbitrarily from $1$ to $\ell$, and assign the second index, $j_v$, to the label.
		Thus, $1 \le j_v \le \ell$.
		\item The last index, $d_v$, represents the layer of $v$ in the given layered $H$-partition of $G$.
		Clearly, $d_v \ge 1$ and $d_v$ is at most the number of layers in $G$, which
		can be as large as $\Omega(|V(G)|)$.
	\end{itemize}		
	
	In order to complete the track assignment, we define the order of vertices in the same track.
	Notice that vertices of every bag, $A_x$ for some $x \in V(H)$, are on different
	tracks defined by the second and the third indices of the track assignment. Therefore, only
	the vertices of $G$ corresponding to different bags can belong to the same track. For those
	vertices, the order is inherited from the given track layout $\Th(H)$. That is, $v < u$ in $\Th(G)$
	with $v \in A_x, u \in A_y$ if and only if $x < y$ in $\Th(H)$; see Figure~\ref{fig:no2}.
	
	Now we verify that $\Th(G)$ is a valid track layout, that is, it contains no X-crossings. For a 
	contradiction suppose that $(u_1, v_1) \in E(G)$ and $(u_2, v_2) \in E(G)$ form an X-crossing, that is,
	$\tr{u_1} = \tr{u_2}, \tr{v_1} = \tr{v_2}$ and $u_1 < u_2, v_2 < v_1$. Since all vertices of a bag are
	on different tracks, it follows that $u_1$ and $u_2$ belong to different bags and that 
	$v_1$ and $v_2$ belong to different bags. Therefore, the two edges correspond either to
	an X-crossing in $\Th(H)$ (if the two edges are intra-bag edges), or to an edge
	of $H$ with both endpoints on the same track of $\Th(H)$ (if one of the edges is an inter-bag edge).
	Both of the options violate the definition of $\Th(H)$; hence, $\Th(G)$ contains no X-crossings.
	
	Finally, observe that the partial span in $\Th(G)$ corresponding to the third dimension of 
	the track assignment, $d_v$, is at most one, as it is based on a
	layering of $G$. By Lemma~\ref{lm:wrap}, the track layout can be wrapped onto
	$3 \ell \cdot \tn(H)$ tracks; Figure~\ref{fig:no2} illustrates the process.
\end{proof}	


\begin{figure}[!t]
	\begin{subfigure}[b]{.38\linewidth}
		\centering
		\includegraphics[page=1,height=4.2cm]{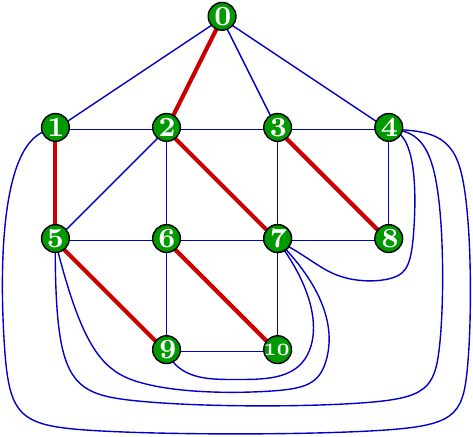}
		\caption{}
	\end{subfigure}    
	\hfill
	\begin{subfigure}[b]{.3\linewidth}
		\centering
		\includegraphics[page=2,height=4.2cm]{planar_hpart}
		\caption{}
	\end{subfigure}
	\begin{subfigure}[b]{.3\linewidth}
		\centering
		\includegraphics[page=3,height=4.2cm]{planar_hpart}
		\caption{}
	\end{subfigure}
	\caption{
		(a)~A planar graph with a layered $H$-partition rooted at $v=0$ with layered width $\ell=1$;
		intra-bag edges are red.
		(b)~A $4$-track layout of $H$.
		(c) A $12$-track layout of the graph constructed using Lemma~\ref{lm:main}.}
	\label{fig:no2}
\end{figure}

Combining Lemma~\ref{lm:main}, Lemma~\ref{lm:hpart}, and Theorem~\ref{thm:p3t-25}, we get the following result.

\begin{theorem}
	\label{thm:planar-track}
	The track number of a planar graph is at most $225$.
\end{theorem}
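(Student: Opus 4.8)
The plan is to assemble Theorem~\ref{thm:planar-track} directly from the three ingredients named just before the statement, tracking the constants carefully. First I would invoke Lemma~\ref{lm:hpart}: every planar graph $G$ admits an $H$-partition of layered width $3$ in which $H$ is itself planar with treewidth at most $3$. That $H$ is therefore a (subgraph of a) planar $3$-tree, so by Theorem~\ref{thm:p3t-25} — applied to a planar $3$-tree containing $H$ as a subgraph, since track layouts restrict to subgraphs — we get $\tn(H) \le 25$. Then I would feed this into Lemma~\ref{lm:main} with $\ell = 3$: since $G$ has a layered $H$-partition of layered width $3$, it has track number at most $3\ell \cdot \tn(H) = 3 \cdot 3 \cdot 25 = 225$.

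The one subtlety worth spelling out is the reduction from ``$H$ has treewidth at most $3$ and is planar'' to ``$\tn(H) \le 25$'': Theorem~\ref{thm:p3t-25} is stated for planar $3$-trees, i.e.\ maximal graphs of treewidth $3$, whereas Lemma~\ref{lm:hpart} only guarantees a planar graph of treewidth at most $3$. I would resolve this by the standard observation that any planar graph of treewidth at most $3$ is a spanning subgraph of some planar $3$-tree (triangulate a planar embedding while keeping treewidth $3$, or add edges to complete it to a maximal graph of treewidth $3$ — and such a completion can be taken planar), and that deleting vertices and edges from a $t$-track layout yields a $t$-track layout of the remaining subgraph. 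Hence $\tn(H) \le \tn(\widehat H) \le 25$ where $\widehat H$ is the enclosing planar $3$-tree. If one prefers to avoid even this much, Lemma~\ref{lm:main}'s proof only uses a track layout of $H$, so it suffices to exhibit a $25$-track layout of $H$, which the subgraph remark gives immediately.

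Putting the arithmetic together: $3 \cdot 3 \cdot 25 = 225$, which is the claimed bound. I do not expect any real obstacle here — the theorem is a bookkeeping composition of previously established results — so the ``hard part'' is purely making sure the layered width ($\ell = 3$) and the treewidth parameter ($k = 3$, giving the $25$ from Theorem~\ref{thm:p3t-25}) are not conflated, since both happen to equal $3$ in this application. A reader could be misled into thinking some cancellation or coincidence is at work, whereas the two $3$'s play genuinely different roles: one is the multiplicative overhead from wrapping the layer dimension in Lemma~\ref{lm:main}, the other is the treewidth of the quotient graph $H$ that determines $\tn(H)$ via the tree-partition argument of Section~\ref{sect:bounded_treewidth}.
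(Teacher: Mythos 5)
Your proof is correct and is essentially the paper's own argument: the paper proves Theorem~\ref{thm:planar-track} exactly by combining Lemma~\ref{lm:hpart}, Theorem~\ref{thm:p3t-25}, and Lemma~\ref{lm:main} to get $3 \cdot 3 \cdot 25 = 225$. Your extra care about passing from ``planar, treewidth at most $3$'' to a planar $3$-tree via a supergraph and subgraph-monotonicity of track layouts is a reasonable filling-in of a detail the paper leaves implicit.
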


\subsection{Lower Bounds}
\label{sect:general_lower}

Dujmovi{\'c} et al.~\cite{DPW04} show an outerplanar graph that requires $4$ tracks and 
prove that every outerplanar graph has a $5$-track layout. Our next result closes
the gap between the lower and the upper bounds answering the question posed in~\cite{DPW04}.

\begin{theorem}
	\label{thm:outer-track-5}    
	The outerplanar graph in Figure~\ref{fig:5track:ex} has track number $5$.
\end{theorem}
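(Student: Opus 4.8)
The plan is to establish the matching lower bound $\tn(G)\ge 5$ for the specific outerplanar graph $G$ in Figure~\ref{fig:5track:ex}, since Lemma~\ref{lm:base}(\ref{lm:base_outer}) already gives the upper bound $\tn(G)\le 5$. First I would recall that a $4$-track layout of $G$ corresponds (after deleting one track at a time if need be, or directly) to a very rigid combinatorial structure: each track is an independent set with a linear order, and between any two tracks the bipartite subgraph drawn is ``non-crossing,'' i.e.\ it is an interval-like / leveled planar bipartite graph. So I would argue by contradiction: suppose $G$ admits a $4$-track layout $\{V_1,V_2,V_3,V_4\}$ with orders $<_i$.

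The key steps, in order: (1) Identify the highly connected ``core'' gadget of $G$ — presumably $G$ is built by nesting or chaining several copies of the outerplanar graph on few vertices that already forces $4$ tracks in \cite{DPW04}, arranged so that two such copies cannot simultaneously be $4$-colored compatibly. (2) Use the fact that $G$ is maximal outerplanar (a triangulated polygon), so its weak dual is a tree and its bounded faces are triangles; in any track layout each triangular face $\langle u,v,w\rangle$ must occupy three distinct tracks. (3) Exploit a counting / parity obstruction: count incidences between faces and the $\binom{4}{3}=4$ possible track-triples, and show the outer cycle together with the internal structure of $G$ forces more ``triangle types'' than $4$ tracks can supply without creating an X-crossing. (4) Alternatively, and more likely the cleaner route, do a direct case analysis: fix the two endpoints of a long outer path of $G$ and track how their positions on the $4$ tracks propagate along the fan/chain of triangles, showing every branch eventually forces two edges into an X-crossing. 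Since the graph is explicit and finite, this case analysis is in principle mechanical but must be organized carefully.

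The main obstacle I anticipate is step (4): the case analysis on a $4$-track layout can branch quite a lot ($4$ choices per vertex, constrained but still numerous), so the real work is finding the right invariant — e.g.\ ``the leftmost vertex of the outer face on each track,'' or a potential function measuring how the set of occupied track-pairs shrinks as we move inward through the nested triangles — that collapses the case tree to a handful of essentially different configurations. A secondary subtlety is making sure the argument genuinely uses \emph{all four} tracks being insufficient and is not accidentally proving a weaker bound; I would want a clean lemma of the form ``in any $4$-track layout, a fan of $k$ triangles sharing a common vertex forces the apex and one rim vertex onto the same track once $k$ is large enough,'' and then check that $G$ contains such a fan (or two interlocking fans) of the required size. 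Given that the paper obtains the bound via a SAT solver for related leveled-graph lower bounds (Section~\ref{sect:sat}), it is plausible the intended proof here is likewise partly computer-assisted; in that case the written proof reduces to (a) describing $G$ precisely, (b) invoking the SAT-certified non-existence of a $4$-track layout, and (c) pairing it with Lemma~\ref{lm:base}(\ref{lm:base_outer}) for the upper bound. I would present the structural skeleton above and defer the finite check to the computational appendix.
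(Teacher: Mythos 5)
There is a genuine gap: what you have written is a plan for a proof, not a proof. The entire content of Theorem~\ref{thm:outer-track-5} is the lower bound $\tn(G)\ge 5$ for the specific graph of Figure~\ref{fig:5track:ex} (the upper bound is indeed immediate from the known $5$-track layouts of outerplanar graphs, as you say), and your steps (1)--(4) defer exactly that content. Step (1) only guesses at the structure of $G$; step (3) proposes a counting/parity obstruction that is never formulated, and there is no reason to believe a global face-count against the $\binom{4}{3}$ track-triples yields a contradiction --- the paper does not use anything of this kind; and the ``fan lemma'' you hope for in step (4) is likewise unsubstantiated. The paper's actual proof tames the case explosion you worry about with two concrete, reusable local obstructions: a $Q$-configuration (a $4$-cycle $a,b,c,d$ occupying three tracks which traps any vertex placed between $b$ and $c$ on the middle track, forbidding its neighbors from two tracks) and a $W$-configuration (two nested edge pairs across two tracks forcing two further vertices onto distinct tracks). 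It then fixes the tracks of the central triangle $a,b,c$ and splits into four cases according to how many of the three vertices $z_1,z_2,z_3$ lie on the fourth track, resolving each case in a few lines via these configurations. Without either these tools or a completed case analysis, nothing is established.

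Your fallback --- that the theorem is computer-assisted and one can ``invoke the SAT-certified non-existence of a $4$-track layout'' --- does not match the paper either: the SAT formulation of Section~\ref{sect:sat} is used only for the lower bounds on X-trees, Halin graphs, and weakly leveled planar graphs (Theorem~\ref{thm:other}), whereas Theorem~\ref{thm:outer-track-5} is given a self-contained human-readable case analysis. If you want to salvage your outline, the productive move is to replace steps (3)--(4) with an explicit description of the graph, isolate the two forbidden configurations as standalone claims, and then carry out the finite case split in full; as it stands, the proposal acknowledges rather than closes the gap.
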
    

\newcommand{\QC}[4]{\ensuremath{Q(#1,#2\!<\!#3,#4)}}
\newcommand{\WC}[4]{\ensuremath{W(#1\!<\!#2,#3\!<\!#4)}}

Before proving the theorem, we introduce two configurations that we use in the proof. The first configuration,
illustrated in Figure~\ref{fig:5track:Q}, is defined on four vertices forming a cycle.
If $\tr{a}=1$, $\tr{b} = \tr{c} = 2$, $\tr{d} = 3$, and $b < c$, then for
every vertex $v$ that is ``inside'' the quadrangle (that is, $\tr{v}=2$ and $b < v < c$), 
its neighbor $u$ is not on tracks $1$ and $3$. We call this a
\df{$\QC{a}{b}{c}{d}$-configuration} for vertex $v$.
The second configuration, illustrated in Figure~\ref{fig:5track:W}, is defined on four vertices, $a$, $b$, $c$, $d$, 
with $\tr{a} = \tr{b} = 1$, $\tr{c} = \tr{d} = 2$ and $a < b$, $c < d$. If there exist two
vertices $u$ and $v$ together with edges $(u, a)$, $(u, d)$, $(v, b)$, $(v, c)$, then
$\tr{u} \neq \tr{v}$. We call this a \df{$\WC{a}{b}{c}{d}$-configuration} for vertices $u$ and $v$.
We emphasize that for both configurations, the actual tracks of the vertices are irrelevant; it is
only important which vertices share tracks.

\begin{figure}[t]
	\begin{subfigure}[b]{.38\linewidth}
		\centering
		\includegraphics[page=1,width=0.8\textwidth]{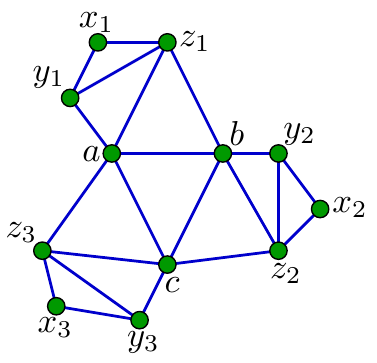}
		\caption{A graph with track number $5$}
		\label{fig:5track:ex}
	\end{subfigure}    
	\hfill
	\begin{subfigure}[b]{.30\linewidth}
		\centering
		\includegraphics[page=2,width=1.0\textwidth]{outer_track5}
		\caption{$Q$-configuration}
		\label{fig:5track:Q}
	\end{subfigure}
	\begin{subfigure}[b]{.30\linewidth}
		\centering
		\includegraphics[page=3,width=1.0\textwidth]{outer_track5}
		\caption{$W$-configuration}
		\label{fig:5track:W}
	\end{subfigure}
	\caption{An illustration for Theorem~\ref{thm:outer-track-5}}
	\label{fig:5track}
\end{figure}

\begin{proofof}{Theorem~\ref{thm:outer-track-5}}
	Assume that the graph in Figure~\ref{fig:5track:ex} has a $4$-track layout.
	Without loss of generality, we may assume that vertex $a$ is on track $1$, vertex $b$ is on track $2$, and vertex $c$
	is on track $3$. Next we consider tracks of vertices $z_1$, $z_2$, and $z_3$, and 
	distinguish four cases depending on how many of the vertices are on track $4$.
	
	\begin{itemize}
		\item None of $z_1$, $z_2$, $z_3$ are on track $4$.
		
		It is easy to see that $\tr{z_1}=3$, $\tr{z_2}=1$, and $\tr{z_3}=2$; see Figure~\ref{ot5_1}.
		Assume without loss of generality that $b < z_3$, that is, 
		vertex $b$ precedes vertex $z_3$ on track $2$. Then in order for edge $(c, z_3)$ to avoid a crossing with edge
		$(b, z_1)$, vertex $z_1$ should precede $c$ on track $3$, that is, $z_1 < c$. Then $a$ and $z_2$
		form a $\WC{z_1}{c}{b}{z_3}$-configuration, a contradiction.
		
		\item All of $z_1$, $z_2$, $z_3$ are on track $4$.
		
		Assume without loss of generality that $z_1 < z_2 < z_3$. We show that the graph in Figure~\ref{ot5_4} is not embeddable in four tracks.
		
		Observe that $z_2$'s neighbors cannot be on track $1$; otherwise, the edge from $z_2$ to the neighbor crosses one of the edges
		$(a, z_1)$ or $(a, z_3)$. Therefore, $\tr{y_2} = 3$ and $\tr{x_2} = 2$.
		Notice that $y_2 < c$, as otherwise edges $(y_2, z_2)$ and $(c, z_3)$ cross. Then $x_2$ and $b$
		form a $\WC{z_1}{z_2}{y_2}{c}$-configuration, a contradiction.
		
		\item One of $z_1$, $z_2$, $z_3$ is on track $4$; suppose $\tr{z_1} = 4$.
		
		It is easy to see that $\tr{z_2} = 1$ and $\tr{z_3} = 2$. Assume without loss of generality that $z_3 < b$; it follows that $a < z_2$, as
		otherwise $(b, z_2)$ and $(a, z_3)$ cross.
		Next we prove that the graph in Figure~\ref{ot5_2} is not embeddable in four tracks.
		We distinguish two cases depending on the track of $y_1$.
		
		First assume $\tr{y_1} = 3$. It holds that $y_1 < c$, as otherwise $(c, z_2)$ and $(y_1, a)$ cross. Now it is impossible to assign a track for $y_3$:
		If $\tr{y_3} = 1$, then $y_3$ and $a$ form a $\WC{y_1}{c}{z_3}{b}$-configuration;
		If $\tr{y_3} = 4$, then $y_3$ and $z_1$ form a $\WC{y_1}{c}{z_3}{b}$-configuration.
		
		Second assume $\tr{y_1} = 2$. It holds that $y_1 < b$, as otherwise $(b, z_2)$ and $(a, y_1)$ cross.
		If $z_3 < y_1 < b$, then $y_1$ forms a $\QC{c}{z_3}{b}{a}$-configuration.
		If $y_1 < z_3 < b$, then $z_3$ forms a $\QC{z_1}{y_1}{b}{a}$-configuration, a contradiction.
		
		\item Two of $z_1$, $z_2$, $z_3$ are on track $4$. Suppose $\tr{z_1} = \tr{z_2} = 4$; thus, $\tr{z_3} = 2$.
		
		Assume without loss of generality that $z_1 < z_2$.
		Next we prove that the graph in Figure~\ref{ot5_3} is not embeddable in four tracks. We distinguish two cases depending on
		the relative order of $b$ and $z_3$.
		
		First assume $b < z_3$. Consider the track of $y_2$. If $\tr{y_2}=1$, then $a$ and $y_2$ form a
		$\WC{z_1}{z_2}{b}{z_3}$-configuration; thus, $\tr{y_2} = 3$. To avoid a crossing between edges
		$(c, z_3)$ and $(b, y_2)$, we have $y_2 < c$. Now it is not possible to layout $x_2$.
		If $\tr{x_2} = 1$, then $a$ and $x_2$ form a $\WC{z_1}{z_2}{y_2}{c}$-configuration.
		Otherwise if $\tr{x_2} = 2$, then $b$ and $x_2$ form a $\WC{z_1}{z_2}{y_2}{c}$-configuration.
		
		Second assume $z_3 < b$. Consider the track of $y_1$. If $\tr{y_1} = 3$, then $y_1 < c$, as otherwise $(y_1, z_1)$ and $(c, z_2)$ cross.
		Now it is impossible to layout $y_3$, which
		forms and a $\WC{y_1}{c}{z_3}{b}$-configuration with vertex $a$ if $\tr{y_3} = 1$, 
		and it forms a $\WC{y_1}{c}{z_3}{b}$-configuration with vertex $z_1$ if $\tr{y_3} = 4$.
		
		Thus, $\tr{y_1} = 2$. It follows that $y_1 < b$, as otherwise $(y_1, z_1)$ and $(b, z_2)$ cross. 
		If $z_3 < y_1 < b$, then $y_1$ is in a $\QC{a}{z_3}{b}{c}$-configuration and its neighbor, $x_1$, cannot be placed.
		Otherwise if $y_1 < z_3 < b$, then $z_3$ is in a $\QC{a}{y_1}{b}{z_1}$-configuration, and $y_3$ cannot be placed.
	\end{itemize}    

	Therefore, the graph in Figure~\ref{fig:5track:ex} is not embeddable in a $4$-tracks, which concludes the proof
	of the theorem.
\end{proofof}

\begin{figure}[t]
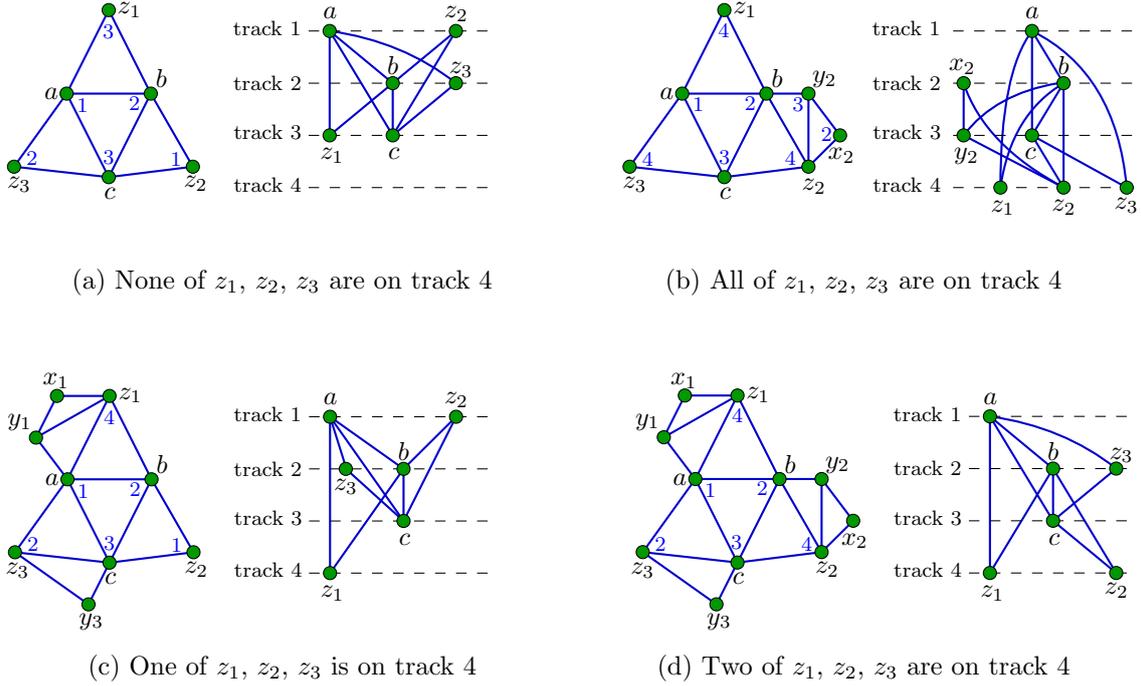

	\centering
	\begin{subfigure}[b]{.49\linewidth}
		\includegraphics[page=4,width=\textwidth]{outer_track5}
		\caption{None of $z_1$, $z_2$, $z_3$ are on track $4$}~\\
		\label{ot5_1}
	\end{subfigure}    
\hfill
	\begin{subfigure}[b]{.49\linewidth}
		\includegraphics[page=6,width=\textwidth]{outer_track5}
		\caption{All of $z_1$, $z_2$, $z_3$ are on track $4$}~\\
		\label{ot5_4}
	\end{subfigure}
	\begin{subfigure}[b]{.49\linewidth}
		\includegraphics[page=5,width=\textwidth]{outer_track5}
		\caption{One of $z_1$, $z_2$, $z_3$ is on track $4$}
		\label{ot5_2}
	\end{subfigure}
\hfill
	\begin{subfigure}[b]{.49\linewidth}
		\includegraphics[page=7,width=\textwidth]{outer_track5}
		\caption{Two of $z_1$, $z_2$, $z_3$ are on track $4$}
		\label{ot5_3}
	\end{subfigure}
	\caption{Different cases in the proof of Theorem~\ref{thm:outer-track-5}. Track of a vertex is blue.}    
\end{figure}

Dujmovi{\'c} et al. (Lemma~22 of~\cite{DPW04}) show that, for every outerplanar graph $H$ with track number $\tn(H)$, there exists a
planar graph whose track number is $\tn(H) + 3$. We stress that the graph in their construction is a planar $3$-tree. Therefore, we have
the following improved lower bound for the track number of planar $3$-trees.

\begin{corr}
	\label{cor:planar-track-8}
	There exists a planar $3$-tree $G$ with track number $\tn(G) = 8$.
\end{corr}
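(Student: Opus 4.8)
The plan is to invoke the construction of Dujmović et al.\ (Lemma~22 of~\cite{DPW04}) as a black box that takes an outerplanar graph $H$ and produces a planar graph with track number exactly $\tn(H)+3$. By Theorem~\ref{thm:outer-track-5} we now have an explicit outerplanar graph $H_0$ with $\tn(H_0)=5$, so applying the construction to $H_0$ yields a planar graph $G$ with $\tn(G)=5+3=8$. The only thing that requires attention is the claim, asserted in the excerpt, that the graph produced by this construction is in fact a planar $3$-tree.

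So the main step is to verify that $G$ (the output of the Dujmović et al.\ construction applied to $H_0$) is a planar $3$-tree. First I would recall the construction: it takes the outerplanar graph $H$, treated together with a fixed outerplanar embedding, and augments it with a bounded number of additional vertices and edges — roughly speaking, new apex-like vertices are added so that the three extra tracks become forced, and the augmentation is done in a way that keeps the result planar. The task is then to exhibit a construction order of $G$ in which we start from a triangle ($3$-clique) and repeatedly add a vertex adjacent to exactly the three vertices of an existing triangular face (equivalently, to verify $G$ is maximal planar with no separating triangles, or directly that $\mathrm{tw}(G)=3$ together with maximality). Since an outerplanar graph can itself be completed to a maximal outerplanar graph — a $2$-tree — I would first observe (or quote from~\cite{DPW04,ABGKP18}) that $H_0$ may be taken maximal outerplanar, hence a $2$-tree, and then check that the three-fold augmentation stacks cleanly on top of it: adding a universal-to-the-outer-face vertex turns a maximal outerplanar graph into a planar $3$-tree, and the construction of~\cite{DPW04} is (essentially) three iterated stackings of this type, each of which preserves ``planar $3$-tree''.

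The hard part will be reconstructing the precise augmentation used in Lemma~22 of~\cite{DPW04} — the excerpt only cites it without reproducing it — and confirming that every added vertex can be inserted into a triangular face of the current graph rather than into a larger face. If, as I expect, the construction adds one apex vertex adjacent to all vertices on the outer boundary and then recurses on each resulting inner triangle, then the $3$-tree property is immediate by induction on the number of added layers, and adding edges to triangulate is unnecessary. I would therefore: (i) state that $H_0$ from Figure~\ref{fig:5track:ex} can be taken as (a subgraph of) a maximal outerplanar graph, so the construction applies; (ii) recall the construction and note each step is a stacking onto a triangular face; (iii) conclude $\tn(G)=8$ from Theorem~\ref{thm:outer-track-5} and Lemma~22 of~\cite{DPW04}, and that $G$ is a planar $3$-tree by the recursive characterization in Section~\ref{sect:prel}. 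A subtlety to flag: the lower bound $\tn(G)\ge 8$ relies on the ``$+3$'' in Lemma~22 being tight for \emph{every} outerplanar $H$, so I would double-check that the cited lemma gives equality (not just $\ge$), which it does; combined with the upper bound $\tn(G)\le 25$ from Theorem~\ref{thm:p3t-25} this is consistent, so no contradiction arises.
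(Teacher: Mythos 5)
Your proposal follows essentially the same route as the paper: apply Lemma~22 of~\cite{DPW04} to the $5$-track outerplanar graph of Theorem~\ref{thm:outer-track-5} and observe that the resulting graph is a planar $3$-tree with track number $5+3=8$. The paper simply asserts (``we stress'') that the construction of~\cite{DPW04} yields a planar $3$-tree, so your extra care in verifying that the augmentation consists of stackings onto triangular faces is a reasonable elaboration of the same argument rather than a different approach.
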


\section{Other Subclasses of Planar Graphs}
\label{sect:other}

Track layouts of planar graphs are related to leveled planar graph drawings that were introduced
by Heath and Rosenberg~\cite{HR92} in the context of queue layouts.
A leveled planar drawing of a graph is a straight-line crossing-free drawing 
in the plane, such that the vertices are placed on a sequence of parallel lines (levels) and every edge joins vertices in two consecutive levels. 
A graph is \df{leveled planar} if it admits a leveled planar drawing. Bannister et al.~\cite{BDDEW15} characterize the class
of graphs by showing that a graph is leveled planar if and only if it is bipartite and admits a 3-track layout. 
In a relaxed definition of leveled drawings, edges between consecutive vertices on the same level are allowed; this leads to \df{weakly leveled planar} graphs.
For graphs that have a weakly leveled planar drawing, Bannister et al.~\cite{BDDEW15} show the upper bound of $6$ for the track number, while
leaving the question of the lower bound open. We answer the question by providing an example of a weakly leveled planar graph whose track number is $6$.

Certain families of planar graphs are known to admit weakly leveled planar graphs; for example, Halin graphs (an embedded tree with no vertices of degree $2$
whose leaves are connected by a cycle) and X-trees (a complete binary tree with extra edges connecting vertices of the same level).
Although track numbers of the graphs have been investigated in several earlier works~\cite{DiM03,GLM05,BDDEW15}, the gaps between lower and upper
bounds remain open. In the following, we present a Halin graph and an X-tree that require $5$ tracks. In addition we provide an
algorithm that constructs a $5$-track layout for every X-tree, thus, closing the gap between lower and upper bounds of the track number of such graphs.

Our lower bound examples in the section rely on computational experiments. To this end, we propose a SAT formulation of the track layout problem, 
and share the source code of our implementation~\cite{bob}. The formulation is simple-to-implement but efficient enough to find optimal 
track layouts of graphs with up to a few hundred of vertices in a reasonable amount of time.

\subsection{An Upper Bound for X-trees}

An \df{X-tree} is a complete binary tree with extra edges connecting vertices of the same level. Formally, 
if $v_1, v_2, \dots, v_{2^d}$ are the vertices of level $d \ge 0$ in the tree in the left-to-right order, 
then the extra edges are $(v_i, v_{i+1})$ for $1 \le i < 2^{d}$; see Figure~\ref{fig:xtree0}. 
Since X-trees admit a weakly leveled planar drawing, their track number is at most $6$~\cite{DiM03,GLM05,BDDEW15}.
Next we improve the upper bound to $5$.

\begin{theorem}
	\label{thm:xtree-upper}
	Every X-tree has a $5$-track layout.
\end{theorem}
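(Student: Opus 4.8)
The plan is to construct an explicit $5$-track layout of an X-tree by processing the tree level by level, reusing the same small set of track indices across all levels via a wrapping argument. First I would set up the natural layering of an X-tree: level $d$ (the $2^d$ tree vertices at depth $d$) forms one layer, and every edge is either a tree edge joining consecutive levels or a "same-level" edge inside one layer. The same-level edges on level $d$ form a path $v_1 v_2 \cdots v_{2^d}$, so within a single level we essentially need a track layout of a path, which by Lemma~\ref{lm:base}(\ref{lm:base_path}) costs only two tracks; the remaining work is to interleave these per-level path layouts so that the bipartite tree edges between consecutive levels do not create X-crossings and so that the total number of distinct tracks, after wrapping, is $5$ rather than $6$.

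The key idea I would pursue is to give each level $d$ a two-dimensional track index $(p_d, q_v)$, where $p_d \in \{0,1,2\}$ is $d \bmod 3$ (or some period-$3$ pattern) and $q_v \in \{0,1\}$ records which of the two "sub-tracks" vertex $v$ lies on within its level's path layout; crucially, I would choose the orderings within each level consistently with the left-to-right order of the tree, so that a parent and its two children in consecutive levels never cause a crossing with another parent–children pair. The naive bound from such a scheme is $3 \cdot 2 = 6$ tracks, matching the known bound, so the real content is an improvement to $5$: I expect this to come from observing that the "bad" combination of sub-track parities does not actually occur at the boundary between two consecutive levels — for instance, by arranging that leaves of a level only ever connect upward in a way that leaves one of the six $(p,q)$ combinations unused on the relevant pair of consecutive levels, or by a more careful period-$5$ assignment of a single index with bounded span, invoking the one-dimensional case of Lemma~\ref{lm:wrap} ($\tn(G) \le 2s+1$ with $s = 2$). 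Concretely I would try to exhibit a track assignment of span $2$ on a linearly-ordered index set: order all vertices first by level, and within a level by the path order, and show each tree edge and each same-level edge spans at most $2$ in this global order's induced coarse index while producing no X-crossing; then wrapping mod $5$ finishes.

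The main obstacle will be verifying the absence of X-crossings between tree edges connecting level $d$ to level $d+1$ when two such edges emanate from different parents, together with the interaction of same-level (path) edges with tree edges at the wrap boundary. Because an X-tree's tree edges respect the left-to-right order at every level, crossings among tree edges from distinct parents are controlled by making the child-ordering refine the parent-ordering; the delicate point is the within-level path edges, which link consecutive vertices on the same track pattern and could nest or cross with the tree edges after the mod-$5$ identification. I would handle this by a careful case analysis on the parity indices of a vertex, its two children, and its path-neighbor, checking the four relevant edge types pairwise — and I would double-check the boundary cases (the root, and the leaves of the finite tree) separately, since those are where a periodic scheme is most likely to break. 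If the clean period-$5$ single-index approach does not go through, the fallback is the $3 \times 2$ two-dimensional scheme with an ad hoc merging of two of the six resulting tracks, justified by showing those two tracks are never simultaneously "active" in a way that would introduce a crossing.
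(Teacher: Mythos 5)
Your high-level target is the same as the paper's: build a track assignment with a one-dimensional index of span at most $2$ and then wrap it onto $2\cdot 2+1=5$ tracks via Lemma~\ref{lm:wrap}. But the actual content of the theorem is the construction of such a span-$2$ assignment, and that is exactly what your proposal leaves open. The concrete schemes you float do not work: if the coarse index of a vertex depends only on its level (and possibly its parity along the level path), you are stuck. Taking the index to be the level makes the level edges intra-track, which is forbidden in a proper track layout; taking the index to be $2d+q_v$ with $q_v\in\{0,1\}$ the path-parity gives level edges of span $1$ but forces some tree edge from a parent at sub-index $0$ to a child at sub-index $1$ of the next level to have span $3$, yielding only $2\cdot 3+1=7$ (or, viewed as your $3\times 2$ scheme, the known bound $6$). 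Your two proposed rescues --- ``one of the six $(p,q)$ combinations is never used at a boundary'' or ``merge two of the six tracks because they are never simultaneously active'' --- are stated as hopes, not arguments, and no static periodic pattern of this kind can succeed precisely because the offset of a child's index relative to its parent's cannot be a function of level and parity alone.

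The paper's proof supplies the missing idea: it builds a planar straight-line grid drawing inductively, level by level, in which the index of a vertex is its $y$-coordinate and these coordinates are \emph{not} periodic in the level --- they drift within a level. Each parent $w_i$ receives its two children at heights $y(w_i)+1$ and $y(w_i)+2$, but \emph{which} child goes where is chosen adaptively, by a four-case rule depending on the height of the previously placed child, so as to maintain the invariant $y(w_i)\le y(v_{2i-2})\le y(w_i)+3$ (Figure~\ref{fig:xtree2}). This keeps every level edge at height difference exactly $1$ and every tree edge at height difference at most $2$, while the $x$-coordinates make each level strictly $x$-monotone and the drawing planar; planarity is what certifies the absence of X-crossings before wrapping, replacing the pairwise case analysis you anticipate. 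So your proposal is on the right track structurally, but it is missing the adaptive placement rule and its invariant, which is the heart of the improvement from $6$ to $5$, as well as any verification that the resulting pre-wrap layout is X-crossing-free.
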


\begin{figure}[tb]
	\begin{subfigure}[b]{.4\linewidth}
		\centering
		\includegraphics[page=1,width=0.7\textwidth]{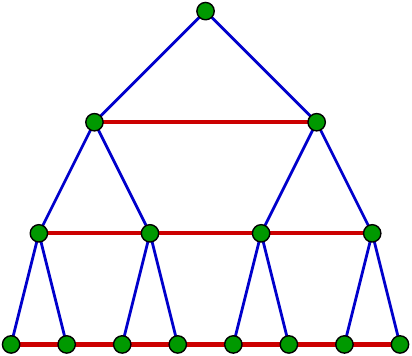}
		\caption{An X-tree with $4$ levels}
		\label{fig:xtree0}
	\end{subfigure}    
	\hfill
	\begin{subfigure}[b]{.5\linewidth}
		\centering
		\includegraphics[page=2,width=0.8\textwidth]{xtrees}
		\caption{A track layout with maximum span $2$}
		\label{fig:xtree1}
	\end{subfigure}
	\centering
	\begin{subfigure}[b]{.99\linewidth}
		\includegraphics[page=5,width=\textwidth]{xtrees}
		\caption{Adding~~vertices~~$v_{2i-1}$~~and~~$v_{2i}$ (squares) while~~maintaining~~an~~invariant
			$y(w_i) \le y(v_{2i-2}) \le y(w_i) + 3$}
		\label{fig:xtree2}
	\end{subfigure}
	\caption{An illustration for Theorem~\ref{thm:xtree-upper}: constructing a $5$-track layout for X-trees}
\end{figure}

\begin{proof}
	We call the edges between the vertices of the same level in the X-tree \df{level edges}, while the remaining edges
	are \df{tree edges}. In order to construct a $5$-track layout, we build a planar straight-line drawing
	of the graph such that every vertex $v$ is laid out on an integer grid with coordinates $x(v) \in \mathbb{N}$ and $y(v) \in \mathbb{N}$. 
	In the drawing we maintain a property that $|y(u) - y(v)| \le 2$ for every edge $(u, v)$; see Figure~\ref{fig:xtree1}.
	It is easy to see that such a drawing corresponds to a track layout with span $2$ in which the
	vertices having equal $y$-coordinates are on the same track.
	By Lemma~\ref{lm:wrap}, the layout can be wrapped onto $5$ tracks.
	
	The drawing is built inductively on the level, $d \ge 0$, of a given X-tree using a hypothesis that
	every X-tree admits a planar straight-line drawing such that:
	\begin{itemize}
		\item $|y(u) - y(v)| = 1$ for every level edge $(u, v)$ and $|y(u) - y(v)| \le 2$ for every tree edge;
		
		\item for the level-$d$ vertices in the tree, $v_1, v_2, \dots, v_{2^d}$, it holds that
		$x(v_1) < x(v_2) < \dots < x(v_{2^d})$; equivalently, the corresponding level edges form a 
		strictly $x$-monotone polyline;
		
		\item every edge of the highest level, $(u, v)$, is on the boundary of the drawing; that is, it is visible 
		from points $(x(u), +\infty)$ and $(x(v), +\infty)$.
		
	\end{itemize}	
	The basis of the induction for $d \le 1$ is trivial; see Figure~\ref{fig:xtree1}. In order to construct the drawing
	for an X-tree, $G$, for $d > 1$, we start with an inductively constructed X-tree, $G'$, of level $d-1$.
	Denote the vertices of $G$ of level $d-1$ by $w_1, \dots, w_{2^{d-1}}$ and the vertices of level $d$
	by $v_1, \dots, v_{2^d}$. Observe that $G$ can be constructed from $G'$ by iteratively adding a pair of vertices, $v_{2i-1}$ and $v_{2i}$, 
	along with three edges $(v_{2i-1}, v_{2i})$, $(v_{2i-1}, w_i)$, and $(v_{2i}, w_i)$, where $1 \le i \le 2^{d-1}$.
	Starting from a drawing of $G'$, we add the pairs of vertices while maintaining an invariant that
	$y(w_i) \le y(v_{2i-2}) \le y(w_i) + 3$ for all $2 \le i < 2^{d-1}$.
	
	Clearly, the invariant is initialized for $i=2$ by setting $x(v_1)=1, y(v_1)=y(w_1)+1$ and $x(v_2)=2, y(v_2)=y(w_1)+2$.
	For $i > 2$, we distinguish the cases based on the $y$-coordinates of the previously placed vertices. As illustrated in 
	Figure~\ref{fig:xtree2}, there are four cases:
	\begin{compactitem}
		\item[case (a) $y(v_{2i-2}) = y(w_i)~~~~~$:] assign $y(v_{2i-1}) = y(w_i) + 1$, $y(v_{2i}) = y(w_i) + 2$;
		\item[case (b) $y(v_{2i-2}) = y(w_i) + 1$:] assign $y(v_{2i-1}) = y(w_i) + 2$, $y(v_{2i}) = y(w_i) + 1$;
		\item[case (c) $y(v_{2i-2}) = y(w_i) + 2$:] assign $y(v_{2i-1}) = y(w_i) + 1$, $y(v_{2i}) = y(w_i) + 2$;
		\item[case (d) $y(v_{2i-2}) = y(w_i) + 3$:] assign $y(v_{2i-1}) = y(w_i) + 2$, $y(v_{2i}) = y(w_i) + 1$.
	\end{compactitem}	
	
	In all of the cases, assign $x(v_i) = i$ for $1\le i \le 2^d$.
	One can easily verify that the desired invariants are maintained and the resulting drawing is planar.
\end{proof}

\subsection{Lower Bounds}
\label{sect:sat}

To test whether a given graph $G = (V, E)$ admits a $t$-track layout, we formulate a Boolean Satisfiability Problem 
that has a solution if and only if $G$ has a layout on $t$ tracks. We introduce two sets of variables:

\begin{itemize}
	\item a variable $\phi_q(v)$ for every vertex $v \in V$ and every track $1 \le q \le t$ indicating whether the vertex
	belongs to the track;
	\item a variable $\sigma(v, u)$ for every pair of vertices $v \in V, u \in V$ indicating whether $v$ precedes $u$ in the 
	order, that is, $v < u$ for some track of the layout.
\end{itemize}	

\noindent
Every vertex is assigned to one track, which is ensured by the track assignment rules:
$$
\phi_1(v) \vee \phi_2(v) \vee \dots \vee \phi_t(v)~~\forall v \in V \text{~~~and~~~} \neg \phi_i(v) \vee \neg \phi_j(v)~~\forall~v\in V, 1\le i < j \le t
$$

\noindent
To guarantee a valid track assignment, we forbid adjacent vertices on the same track:
$$
\neg \phi_i(v) \vee \neg \phi_i(u)~~\forall~(u, v) \in E, 1 \le i \le t
$$

\noindent
For the relative encoding of vertices, we ensure asymmetry and transitivity:
$$
\sigma(v, u) \leftrightarrow \neg \sigma(u, v)~~\forall \text{ distinct } u, v \in V
$$
$$
\sigma(v, u) \wedge \sigma(u, w) \rightarrow \sigma(v, w)~~\forall \text{ distinct } u, v, w \in V
$$

\noindent
To forbid X-crossings among edges, recall that an X-crossing between edges $(u_1, v_1) \in E, (u_2, v_2) \in E$ occurs
when $\tr{u_1} = \tr{u_2}, \tr{v_1} = \tr{v_2}$ and $u_1 < u_2, v_1 > v_2$. This can be expressed by the following rules:
$$
\phi_i(u_1) \wedge \phi_i(u_2) \wedge \phi_j(v_1) \wedge \phi_j(v_2) \rightarrow 
\big(\sigma(u_1, u_2) \wedge \sigma(v_1, v_2)\big) \vee \big(\sigma(u_2, u_1) \wedge \sigma(v_2, v_1)\big)
$$
$$
\forall~(u_1, v_1) \in E, (u_2, v_2) \in E \text{ such that } u_1 \neq u_2, v_1 \neq v_2 \text{ and } \forall~1 \le i, j \le t
$$

The resulting CNF formula contains $\Theta(n^2)$ variables and $\Theta(n^3 + m^2t^2)$ clauses. Using a modern SAT solver, one
can evaluate small and medium size instances (with up to a few hundred of vertices) within a reasonable time.
For example, we computed optimal track layouts for all $977,\!526,\!957$ maximal planar graphs having $n = 18$ vertices. 
In total the computation took $5000$ machine-hours,
and all the graphs turned out to have a $t$-track layout for some $4 \le t \le 7$. Larger graphs, such as one in Corollary~\ref{cor:planar-track-8},
are solved within a few hours on a regular machine. Our implementation is available at~\cite{bob}.

\begin{figure}[!t]
	\begin{subfigure}[b]{.38\linewidth}
		\centering
		\includegraphics[page=1,width=1.0\textwidth]{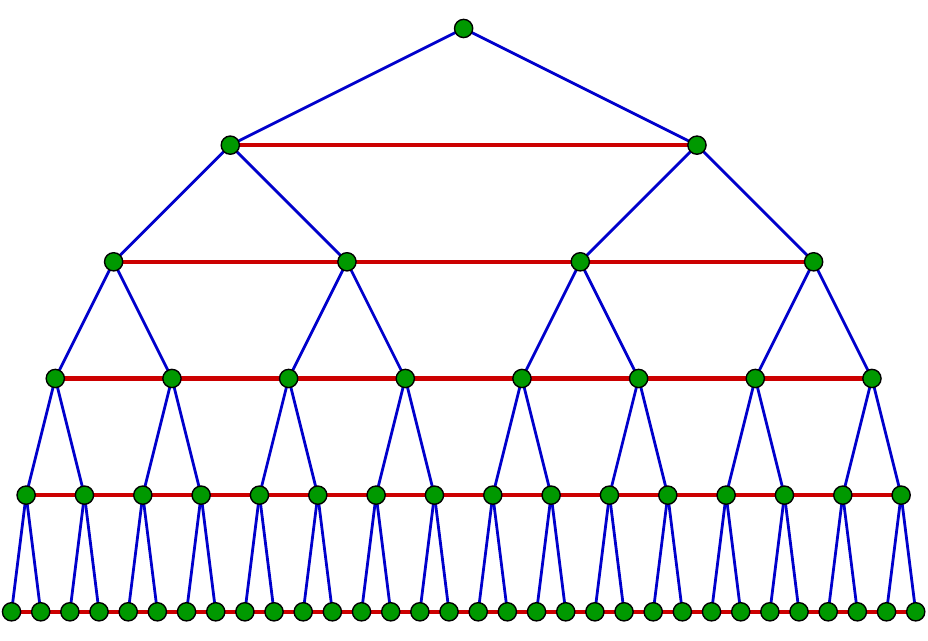}
		\caption{}
		\label{fig:lb_xtree}
	\end{subfigure}    
	\hfill
	\begin{subfigure}[b]{.38\linewidth}
		\centering
		\includegraphics[page=2,width=1.0\textwidth]{subclasses_lower_bounds}
		\caption{}
		\label{fig:lb_halin}
	\end{subfigure}
	\hfill
	\begin{subfigure}[b]{.2\linewidth}
		\centering
		\includegraphics[page=3,width=1.0\textwidth]{subclasses_lower_bounds}
		\caption{}
		\label{fig:weakly}
	\end{subfigure}    
	\caption{(a) An X-tree and (b) a Halin graph that require $5$ tracks.
		(c) A~weakly leveled planar graph that requires $6$ tracks; level edges are red.}
	\label{fig:lower}
\end{figure}

Using the formulation, we identified examples of an X-tree and a Halin graph that require $5$ tracks; see Figure~\ref{fig:lb_xtree} and \ref{fig:lb_halin}.
In particular, X-trees of depth $d \le 5$ admit a $4$-track layout but X-trees with $d \ge 6$ have track number $5$.
Similarly, we found a weakly leveled planar graph with $14$ vertices that has track number $6$; see Figure~\ref{fig:weakly}.
Thus, the algorithm of Bannister et al.~\cite{BDDEW15} for constructing $6$-track layouts of weakly leveled planar graphs is worst-case optimal. The results are summarized as follows.

\begin{theorem}
\label{thm:other}	
There exists an X-tree and a Halin graph with track number $5$.
There exists a weakly leveled planar graph with track number $6$.
\end{theorem}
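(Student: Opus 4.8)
The statement has three parts, each asserting that a specific small graph (shown in Figure~\ref{fig:lower}) achieves a certain track number. For each part the proof splits into an upper bound and a lower bound. The upper bounds are cheap: every X-tree has a $5$-track layout by Theorem~\ref{thm:xtree-upper}; every Halin graph and every weakly leveled planar graph has a $6$-track layout by the construction of Bannister et al.~\cite{BDDEW15}; and for the X-tree and Halin examples one should additionally exhibit an explicit $5$-track assignment (which the SAT solver produces as a witness), while for the weakly leveled example the $6$-track layout comes for free from~\cite{BDDEW15}. So the real content is entirely in the \emph{lower bounds}: showing the X-tree of depth $6$ in Figure~\ref{fig:lb_xtree} and the Halin graph in Figure~\ref{fig:lb_halin} admit no $4$-track layout, and the $14$-vertex weakly leveled graph in Figure~\ref{fig:weakly} admits no $5$-track layout.

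The plan for the lower bounds is to reduce each non-existence claim to a Boolean unsatisfiability certificate, using the SAT encoding described in Section~\ref{sect:sat}. Concretely: for the X-tree example, instantiate the CNF formula with $t=4$ on the depth-$6$ X-tree; the formula has $\Theta(n^2)$ variables and $\Theta(n^3+m^2t^2)$ clauses, which for an instance of this size is well within reach of a modern SAT solver, and the solver reports UNSAT. Repeat with $t=4$ for the Halin graph, and with $t=5$ for the $14$-vertex weakly leveled graph; in all three cases the solver returns UNSAT. Together with the matching layouts found for smaller instances ($4$-track layouts for X-trees of depth $\le 5$, etc.), this pins down the exact track numbers. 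One should state that correctness of the encoding — that the CNF is satisfiable iff the graph has a $t$-track layout — is exactly what was argued in Section~\ref{sect:sat}, so no separate verification is needed beyond trusting the solver (or, for a fully human-checkable proof, extracting and checking a DRAT unsatisfiability proof, which the accompanying implementation~\cite{bob} can emit).

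The main obstacle is that this is a computer-assisted argument with essentially no short hand-written certificate: unlike Theorem~\ref{thm:outer-track-5}, where the $4$-track infeasibility of the outerplanar example was proved by an explicit case analysis on the placement of three vendor vertices using the $Q$- and $W$-configurations, here the graphs (especially the depth-$6$ X-tree, which has $2^7-1=127$ vertices) are too large for such a by-hand case split to be presented cleanly. One could in principle try to mimic the configuration-pushing technique of Theorem~\ref{thm:outer-track-5} to give a self-contained proof for at least the weakly leveled $14$-vertex example, but given the irregular structure of that graph this would be tedious and error-prone. I would therefore present the lower bounds as SAT-verified, note that the source code and instances are publicly available~\cite{bob} for independent reproduction, and — if a referee insists on a checkable artifact — supplement with DRAT proofs. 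The upshot is that the theorem is true and the proof is essentially a pointer to the SAT computation plus the already-established upper-bound results.
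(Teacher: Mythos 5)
Your proposal is correct and matches the paper's approach: the paper likewise establishes these exact track numbers by running the SAT formulation of Section~\ref{sect:sat} on the specific graphs of Figure~\ref{fig:lower} (UNSAT for $t=4$, resp.\ $t=5$, giving the lower bounds) and combines this with the known upper bounds (Theorem~\ref{thm:xtree-upper} for X-trees, the $6$-track construction of~\cite{BDDEW15}, and explicit $5$-track witnesses for the examples). Your additional remarks about DRAT certificates go beyond what the paper provides but do not change the argument.
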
	

\section{Conclusions and Open Problems}
\label{sect:open}

In this paper we improved upper and lower bounds on the track number of several families of planar graphs.
A natural future direction is to close the remaining gaps for graphs listed in Table~\ref{table:tracks}. Next we discuss
several open questions related to track layouts.

Our approach for building track layouts of planar graphs relies on a construction for graphs of bounded treewidth. To the
best of our knowledge, the upper bound on the track number of $k$-trees is $(k + 1)(2^{k+1} - 2)^k$~\cite{Wie17}, while
the lower bound is only quadratic in~$k$~\cite{DMW05}. It seems unlikely that the existing upper bound is the right answer, and
finding a polynomial or even an exponential $2^{\Oh(k)}$ bound would already be an exciting improvement.

\begin{open}
	Improve the upper bound on the track number of $k$-trees.
\end{open}

One way of attacking the above problem is tightening a gap between track and queue numbers of a graph, as the
queue number of a $k$-tree is at most $2^k - 1$~\cite{Wie17}.
As mentioned earlier, every $t$-track graph has a $(t-1)$-queue layout. It is easy to see that this bound is worst-case optimal, that is, there exist $t$-track graphs that require $t-1$ queues in every layout. 
In the other direction, every $q$-queue graph has track number at most
$4q \cdot 4q^{(2q-1)(4q-1)}$~\cite{DPW04}. 
Similarly, every $q$-queue graph with acyclic chromatic number $c$ has track-number at most $c(2q)^{c-1}$~\cite{DMW05}.
Recall that a vertex coloring is \df{acyclic} if there is no bichromatic cycle.
These bounds can likely be improved. For example, $1$-queue graphs always admit a $4$-track layout
but no better bound is known for the case $q \ge 2$.

\begin{open}
	What is the largest track number of a $q$-queue graph?
\end{open}

Finally, we would like to see some progress on \emph{upward} track layouts~\cite{DLMW09,DW06}. An \df{upward track layout} of a 
dag $G$ is a track layout of the underlying undirected graph of $G$, such that the directed graph obtained from $G$ by adding
arcs between consecutive vertices in a track is acyclic. To the best of our knowledge, very little is known about this variant of layouts.
For example, the upward track number of (directed) paths and caterpillars
is known to be $3$~\cite{DLMW09}, while the upward track number of (directed) trees is at most~$5$~\cite{DW06}. Improving
the bounds and investigating other classes of graphs is an interesting research direction.

\begin{open}
	Investigate the upward track number of various families of graphs.
\end{open}

\section*{Acknowledgments}

We thank Jawaherul Alam, Michalis Bekos, Martin Gronemann, and Michael Kaufmann for fruitful initial discussions of the problem.

\bibliographystyle{abbrv}
\bibliography{refs}

\begin{thebibliography}{10}

\bibitem{ABGKP18}
J.~M. Alam, M.~A. Bekos, M.~Gronemann, M.~Kaufmann, and S.~Pupyrev.
\newblock Queue layouts of planar 3-trees.
\newblock {\em Algorithmica}, pages 1--22, 2020.

\bibitem{BDDEW15}
M.~J. Bannister, W.~E. Devanny, V.~Dujmovi{\'c}, D.~Eppstein, and D.~R. Wood.
\newblock Track layouts, layered path decompositions, and leveled planarity.
\newblock {\em Algorithmica}, 81(4):1561--1583, 2019.

\bibitem{Bod98}
H.~L. Bodlaender.
\newblock A partial k-arboretum of graphs with bounded treewidth.
\newblock {\em Theoretical Computer Science}, 209(1-2):1--45, 1998.

\bibitem{GLM05}
E.~Di~Giacomo, G.~Liotta, and H.~Meijer.
\newblock Computing straight-line {3D} grid drawings of graphs in linear
  volume.
\newblock {\em Computational Geometry}, 32(1):26--58, 2005.

\bibitem{DLMW09}
E.~Di~Giacomo, G.~Liotta, H.~Meijer, and S.~K. Wismath.
\newblock Volume requirements of {3D} upward drawings.
\newblock {\em Discrete Mathematics}, 309(7):1824--1837, 2009.

\bibitem{DiM03}
E.~Di~Giacomo and H.~Meijer.
\newblock Track drawings of graphs with constant queue number.
\newblock In {\em Graph Drawing}, pages 214--225. Springer, 2003.

\bibitem{DJMMUW19}
V.~Dujmovi{\'{c}}, G.~Joret, P.~Micek, P.~Morin, T.~Ueckerdt, and D.~R. Wood.
\newblock Planar graphs have bounded queue-number.
\newblock In {\em Foundations of Computer Science}, pages 862--875. {IEEE}
  Computer Society, 2019.

\bibitem{DMW05}
V.~Dujmovi{\'c}, P.~Morin, and D.~R. Wood.
\newblock Layout of graphs with bounded tree-width.
\newblock {\em SIAM Journal on Computing}, 34(3):553--579, 2005.

\bibitem{DPW04}
V.~Dujmovi{\'c}, A.~P{\'o}r, and D.~R. Wood.
\newblock Track layouts of graphs.
\newblock {\em Discrete Mathematics \& Theoretical Computer Science},
  6(2):497--522, 2004.

\bibitem{DW06}
V.~Dujmovi{\'c} and D.~R. Wood.
\newblock Upward three-dimensional grid drawings of graphs.
\newblock {\em Order}, 23(1):1--20, 2006.

\bibitem{FLW06}
S.~Felsner, G.~Liotta, and S.~Wismath.
\newblock Straight-line drawings on restricted integer grids in two and three
  dimensions.
\newblock {\em Graph Algorithms and Applications}, 4:363--398, 2006.

\bibitem{HR92}
L.~S. Heath and A.~L. Rosenberg.
\newblock Laying out graphs using queues.
\newblock {\em SIAM Journal on Computing}, 21(5):927--958, 1992.

\bibitem{Oll73}
L.~T. Ollmann.
\newblock On the book thicknesses of various graphs.
\newblock In {\em Southeastern Conference on Combinatorics, Graph Theory and
  Computing}, volume~8, page 459, 1973.

\bibitem{bob}
S.~Pupyrev.
\newblock A {SAT}-based solver for constructing optimal linear layouts of
  graphs:\\ \url{https://github.com/spupyrev/bob}.

\bibitem{Wie17}
V.~Wiechert.
\newblock On the queue-number of graphs with bounded tree-width.
\newblock {\em Electronic Journal of Combinatorics}, 24(1):P1.65, 2017.

\end{thebibliography}

\end{document}